\numberwithin{table}{section}
\numberwithin{figure}{section}
\numberwithin{equation}{section}
\newtheorem{theorem}{Theorem}
\newtheorem{assumption}{Assumption}
\newtheorem{corollary}{Corollary}
\newtheorem{remark}{Remark}
\newtheorem{condition}{Condition}
\newtheorem{proposition}{Proposition}
\titleformat{\paragraph}[runin]{\normalfont\normalsize\bfseries}{}{0pt}{}
\newcommand{\blind}{1}
\begin{document}

	\def\spacingset#1{\renewcommand{\baselinestretch}%
		{#1}\small\normalsize} \spacingset{1}

	%%%%%%%%%%%%%%%%%%%%%%%%%%%%%%%%%%%%%%%%%%%%%%%%%%%%%%%%%%%%%%%%%%%%%%%%%%%%%%
	
	\if1\blind
	{
		\title{\bf Causal Inference under Network Interference with Noise}
		\author{Wenrui Li\thanks{\textit{Contact:} Wenrui Li, wenruili@bu.edu,
				Department of Mathematics and Statistics, Boston University, 111 Cummington Mall, Boston, MA 02215, USA.}\\
			Department of Mathematics \& Statistics, Boston University\\
			Daniel L.
			Sussman \\
			Department of Mathematics \& Statistics, Boston University\\
			and \\
			Eric D.
			Kolaczyk\hspace{.2cm}\\
			Department of Mathematics \& Statistics, Boston University}
		\maketitle
	} \fi
	
	\if0\blind
	{
		\bigskip
		\bigskip
		\bigskip
		\begin{center}
			{\LARGE\bf Causal Inference under Network Interference with Noise}
		\end{center}
		\medskip
	} \fi

	\bigskip
	\begin{abstract}
		Increasingly, there is a marked interest in estimating causal effects under network interference due to the fact that interference manifests naturally in networked experiments.
		However, network information generally is available only up to some level of error.
		We study the propagation of such errors to estimators of average causal effects under network interference.
		Specifically, assuming a four-level exposure model and Bernoulli random assignment of treatment, we characterize the impact of network noise on the bias and variance of standard estimators in homogeneous and inhomogeneous networks.
		In addition, we propose method-of-moments estimators for bias reduction where a minimal number of network replicates are available. We show our estimators are asymptotically normal and provide confidence intervals for quantifying the uncertainty in these estimates. We illustrate the practical performance of our estimators through simulation studies in British secondary school contact networks.
	\end{abstract}
	
	\noindent%
	{\it Keywords:} Noisy network; Causal effect; Exposure model; Method-of-moments.
	\vfill
	
	\newpage
	\spacingset{1.5} % DON'T change the spacing!
	\section{Introduction}
	\label{sec:intro}

	In recent years, there has been an enormous interest in the assessment of treatment effects within networked systems. Naturally, interference (\cite{cox1958planning}) cannot realistically be assumed away when doing experiments on networks.
	The outcome of one individual may be affected by the treatment assigned to other individuals, which violates the `stable unit treatment value assumption' (SUTVA) (\cite{neyman1923applications}, \cite{rubin1990formal}).
	As a result, much of what is considered standard in the traditional design of randomized experiments and the corresponding analysis for causal inference does not apply directly in this context.
	
	Moreover, network information generally is available only up to some level of error, also known as network noise. For example, there is often measurement error associated with network constructions, where, by `measurement error’ we will mean true edges being observed as non-edges, and vice versa. Such edge noise occurs in self-reported contact networks where participants may not perceive and recall all contacts correctly (\cite{smieszek2012collecting}). It can also be found in biological networks (e.g., of gene regulatory relationships), which are often based on notions of association (e.g., correlation, partial correlation, etc.) among experimental measurements of gene activity levels that are determined by some form of statistical inference.
	We investigate how network noise impacts estimators of average causal effects under network interference and how to account for the noise.
	
	\subsection{Problem setup}
	We assume the observed graph is a noisy version of a true graph.
	Let $G=(V,E)$ be an undirected graph and $G^\text{obs}=(V,E^\text{obs})$ be the observed graph, where we assume that the vertex set $V$ is known. Denote the adjacency matrix of $G$ by $\bm A=(A_{i,j})_{N_v\times N_v}$ and that of $G^\text{obs}$ by $\tilde {\bm A}=(\tilde A_{i,j})_{N_v\times N_v}$.
	Hence $A_{i,j} = 1$ if there is a true edge between the $i$-th vertex and the $j$-th vertex, and 0 otherwise, while $\tilde A_{i,j} = 1$ if an edge is observed between the $i$-th vertex and the $j$-th vertex, and 0 otherwise. We assume throughout that $G$ and $G^\text{obs}$ are simple.
	
	We express the marginal distributions of the $\tilde A_{i,j}$ in the form (\cite{balachandran2017propagation}): 
	\begin{equation}\label{eq3.1}
		\begin{aligned}
			\tilde A_{i,j}\sim
			\begin{cases}
				\text{Bernoulli}(\alpha_{i,j}), & \text{if } \{i,j\}\in E^c,\\
				\text{Bernoulli}(1-\beta_{i,j}), & \text{if } \{i,j\}\in E,\\
			\end{cases}
		\end{aligned}
	\end{equation}
	where $E^c=\{\{i,j\} : i,j\in V; i< j\}  \backslash E$.
	Drawing by analogy on the example of network construction based on hypothesis testing, $\alpha_{i,j}$ can be interpreted as the probability of a Type-I error on the (non)edge status for vertex pair $\{i,j\}\in E^c$, while $\beta_{i,j}$ is interpreted as the probability of Type-II error, for vertex pair $\{i,j\}\in E$.
	Our interest is in characterizing the manner in which the uncertainty in the $\tilde A_{i,j}$ propagates to estimators of average causal effects.

	Let $z_i=1$ indicate that individual $i\in V$ received a given treatment.
	We will refer to $\bm z=(z_1,\cdots,z_{N_v})^\top\in\{0,1\}^{N_v}$ as the treatment assignment vector. Let $p_{\bm z}=\mathbb P(\bm Z=\bm z)$ be the probability that treatment assignment $\bm z$ is generated by the experimental design. Additionally, let $y_i(\bm z)$ denote the outcome for individual $i$ under treatment assignment $\bm z$. In the worst case, there will be $2^{N_v}$ possible exposures for each of the $N_v$ individuals, making causal inference impossible. To avoid this situation, we adopt the notion of so-called exposure mappings, introduced by \citet{aronow2017estimating}.
	We say that $i$ is exposed to condition $k=1,\ldots,K$ if $f(\bm z,\bm x_i ) =c_k$, where $f$ is the exposure mapping, $\bm z$ is the treatment assignment vector, and $\bm x_i $ is a vector of additional information specific to individual $i$.
	Under interference, these authors offer a simple, four-level categorization of exposure ($K=4$) that we revisit here and throughout this paper.
	Taking the vector $\bm x_i$ to be the $i$th column of  the adjacency matrix $\bm  A$ (i.e., $\bm x_i=\bm A_{.i}$), they define
	\begin{equation}\label{eq2.1}
		f(\bm z,\bm A_{\cdot i})=
		\begin{cases}
			c_{11} \text{(Direct + Indirect Exposure)}, & z_i I_{\{\bm z^\top \bm A_{\cdot i}>0\}}=1,\\
			c_{10} \text{(Isolated Direct Exposure)}, & z_i I_{\{\bm z^\top \bm A_{\cdot i}=0\}}=1,\\
			c_{01} \text{(Indirect Exposure)}, & (1-z_i) I_{\{\bm z^\top \bm A_{\cdot i}>0\}}=1,\\
			c_{00} \text{(No Exposure)}, & (1-z_i) I_{\{\bm z^\top \bm A_{\cdot i}=0\}}=1,
		\end{cases}
	\end{equation}
	where the inner product $\bm z^\top \bm A_{\cdot i}$ is the number of treated neighbors of individual $i$.
	
	%As we will see below, this choice of network exposure mapping allows for the definition of various causal estimands, each capturing the effect of a different combination of direct and indirect effects of treatments.
	
	In the general exposure mapping framework of \citet{aronow2017estimating}, potential outcomes are dependent only on the exposure conditions for each unit.
	% defined at the level of exposure conditions.
	Suppose each individual $i$ has $K$ potential outcomes $y_i(c_1),\cdots,y_i(c_K)$ and is exposed to one and only one condition.
	Then, define
	\begin{equation}
		\uptau (c_k,c_l)=\frac{1}{N_v}\sum_{i=1}^{N_v}\left[y_i(c_k)-y_i(c_l)\right]=\overline{y}(c_k)-\overline{y}(c_l)
	\end{equation}
	to be the average causal contrast between exposure condition $k$ versus $l$. Consider again, for example, the exposure mapping function defined in (\ref{eq2.1}).
	A natural set of contrasts is $\uptau (c_{01},c_{00})$, $\uptau (c_{10},c_{00})$, and $\uptau (c_{11},c_{00})$, which capture the average indirect treatment effect, the average direct treatment effect, and the average total treatment effect, respectively.

	Now consider the problem of inference for causal effects under network interference. The Horvitz-Thompson framework accounts for unequal-probability sampling through the use of inverse probability weighting (\cite{horvitz1952generalization}) and is adapted by \cite{aronow2017estimating}  under exposure mappings. In noise-free networks, 
	assuming all individuals have nonzero exposure probabilities for all exposure conditions, the estimator
	
	\begin{equation}
		\label{eq:AS.ideal}
		\hat{\overline{y}}(c_k)=\frac{1}{N_v}\Bigg\{\sum_{i=1 }^{N_v}I_{\{f({\bm Z}, {\bm x}_i)=c_k\}} \frac{y_i(c_k)}{p_i^e(c_k)} \Bigg\}
	\end{equation}
	is well-defined and unbiased for $\overline{y}(c_k)$, where the exposure probabilities $p_i^e(c_k)$ are defined as $\sum_{\bm z}   p_{\bm z} I_{\{f({\bm z}, {\bm x}_i)=c_k\}}$.
	In turn, $\hat\uptau(c_k,c_l)=\hat{\overline{y}}(c_k)-\hat{\overline{y}}(c_l)$
	is an unbiased estimator of $\uptau(c_k,c_l)$.
	
	However, in noisy networks, some exposure levels will be misclassified. For example, in the four-level exposure model, for a node $i$, the expected confusion matrix for observed (rows) versus true (columns) exposures has the following form
	
	\begin{equation}\label{eq:confusion}
		\bm P_i\coloneqq\begin{bmatrix} 
			P_i (\tilde c_{11},c_{11})& P_i(\tilde c_{11},c_{10}) & 0&0\\
			P_i(\tilde c_{10},c_{11})  &P_i(\tilde c_{10},c_{10})&0&0\\
			0&0&P_i (\tilde c_{01},c_{01})& P_i (\tilde c_{01},c_{00})  \\
			0&0& P_i(\tilde c_{00},c_{01})  &P_i (\tilde c_{00},c_{00})  \\
		\end{bmatrix},
	\end{equation}
	\begin{sloppypar}
		\noindent where $\tilde c_k$ represents the exposure level in observed networks and $P_i(\tilde {c}_k,c_l)=\mathbb{E}[I_{\{f({\bm Z},\tilde {\bm A}_i)=c_{k}\}} I_{\{f({\bm Z}, {\bm A}_i)=c_{l}\}}]$. The two off-diagonal blocks are equal to 0, since network noise does not affect treatment status. The four symbols $P_i(\tilde {c}_k,c_l), k\neq l$ are cases where exposure levels are misclassified. In the general exposure mapping framework,  the estimators (\ref{eq:AS.ideal}) for $\overline{y}(c_k)$ are in fact 
	\end{sloppypar}
	\begin{equation}\label{equation3.1}
		\tilde{\overline{y}}_{A\&S}(c_k)=\frac{1}{N_v}\sum_{i=1 }^{N_v}I_{\{ \tilde p_i^e(c_k)>0 \} }  I_{\{f({\bm Z}, {\tilde{ \bm X}}_i)=c_k\}}\frac{1}{\tilde p_i^e(c_k)} \Bigg\{\sum_{l=1}^K y_i(c_l) \ I_{\{f({\bm Z}, {\bm{ x}}_i)=c_l\}}\Bigg\},
	\end{equation}
	where $\tilde{\bm{ X}}_i$ is a noisy version of $\bm x_i$, and $\tilde p_i^e(c_k)=\sum_{\bm z}p_{\bm z}I_{\{f({\bm z}, \tilde{\bm X}_i)=c_k\}}$.
	From (\ref{equation3.1}), we can see that the errors introduced into this estimator by network noise come in two forms: incorrect exposure probabilities and misclassified exposure levels.  
	
	In this paper, we will address the following important questions. First, what is the impact of ignoring network noise? Second, how can we account for network noise?

	\subsection{Related literature}
	
	Awareness of interference goes back at least 100 years (e.g., \cite{ross1916application}), and its impact on standard theory and methods has been studied previously in certain specific contexts, including interference localized to an individual across different rounds of treatment in clinical trials with crossover designs (\cite{grizzle1965two}), interference based on spatial proximity of treated units (\cite{kempton1984inter}) and interference within blocks (\cite{hudgens2008toward}).
	For network interference, an assumption that has gained traction is that the causal effects can be passed along edges in the network.
	A highly studied assumption is to assume that unit outcomes are only impacted by their neighbors in the network (\cite{manski2013identification,athey2018exact}).
	Researchers have recently developed frameworks for estimating average unit-level causal effects under network interference.
	For example, \cite{aronow2017estimating} provided unbiased estimators of average unit-level causal effects induced by treatment exposure.
	\cite{sussman2017elements} proposed minimum integrated variance linear unbiased estimators with respect to a distribution on the potential outcomes.

	Extensive work regarding uncertainty analysis has been done in causal inference without the network structure or interference.
	Many studies have explored the effects of uncertainty in propensity scores on causal inference.
	For instance, there have been efforts to develop Bayesian propensity score estimators to incorporate such uncertainties into causal inference (e.g., \cite{an20104}, \cite{alvarez2014uncertain}).
	And there are some studies on the properties for particular matching estimators for average causal effects (e.g., \cite{abadie2006large}, \cite{schafer2008average}).
	But, to our best knowledge, there has been little attention to date given towards uncertainty analysis of estimators for average causal effects under network interference.
	Exceptions include a Bayesian procedure which accounts for network uncertainty and relies on a linear response assumption to increase estimation precision (\cite{toulis2013estimation}), and structure learning techniques to estimate causal effects under data dependence induced by a network represented by a chain graph model, when the structure of this dependence is not known a priori (\cite{bhattacharya2019causal}).
	
	As remarked above, there appears to be little in the way of a formal and general treatment of the error propagation problem in estimators of average causal effects under network interference.
	However, there are several areas in which the probabilistic or statistical treatment of uncertainty enters prominently in network analysis.
	Model-based approaches include statistical methodology for predicting network topology or attributes with models that explicitly include a component for network noise (\cite{jiang2011network}, \citet{jiang2012latent}), the `denoising' of noisy networks (\cite{chatterjee2015matrix}), the adaptation of methods for vertex classification using networks observed with errors (\cite{priebe2015statistical}), a regression model on network-linked data that is
	based on a flexible network effect assumption and is robust to errors in the network structure (\cite{le2020linear}), and a general Bayesian framework for reconstructing networks from observational data (\cite{young2020robust}).
	The other common approach to network noise is based on a `signal plus noise' perspective.
	For example, \cite{balachandran2017propagation} introduced a simple model for noisy networks that, conditional on some true underlying network, assumes we observe a version of that network corrupted by an independent random noise that effectively flips the status of (non)edges.
	Later, \cite{chang2020estimation} developed method-of-moments estimators for the underlying rates of error when replicates of the observed network are available.
	In a somewhat different direction, uncertainty in network construction due to sampling has also been studied in some depth.
	See, for example, \citet[Chapter~5]{kolaczyk2009statistical} or \cite{ahmed2014network} for surveys of this area.
	However, in that setting, the uncertainty arises only from sampling---the subset of vertices and edges obtained through sampling are typically assumed to be observed without error.

	\subsection{Our contributions and organization of the paper}
	Our contribution in this paper is to quantify how network errors propagate to standard estimators of average causal effects under network interference, and to provide new estimators for average causal effects when replicates of the observed network are available.
	Adopting the noise model proposed by \cite{balachandran2017propagation}, we characterize the impact of network noise on the bias and variance of standard estimators (\cite{aronow2017estimating}) under a four-level exposure model and Bernoulli random assignment of treatment, and we illustrate the asymptotic behaviors on networks for varying degree distributions.
	Additionally, we propose method-of-moments estimators of average causal effects that are asymptotically normal (as the number of vertices increases to infinity), when replicates of the observed network are available.
	Numerical simulation in the context of social contact networks in British secondary schools suggests that high accuracy is possible for networks of even modest size.

	The organization of this paper is as follows. In Section~\ref{sec3} we present the bias and variance of standard estimators in noisy networks under a four-level exposure model and Bernoulli random assignment of treatment.
	Section~\ref{sec4} contains our proposed method-of-moments estimators for the true average causal effects.
	Numerical illustrations are reported in Section \ref{sec5}. Finally, we conclude in Section \ref{sec6} with a discussion of future directions for this work. All proofs are relegated to supplementary materials.

	\section{Impact of ignoring network noise}\label{sec3}
	
	In this section, we characterize the impact of network noise on biases and variances of standard estimators under a four-level exposure model and Bernoulli random assignment of treatment.
	Specifically, we show results for two typical classes of networks: homogeneous and inhomogeneous.
	By the term homogeneous we mean the degrees follow a zero-truncated Poisson distribution, and by inhomogeneous, the degrees follow a Pareto distribution with an exponential cutoff \citep{clauset2009power}.
	Note that many real networks present a bounded scale-free behavior with a connectivity cut-off due to the finite size of the network or to the presence of constraints limiting the addition of new links in an otherwise infinite network (\cite{amaral2000classes}).
	The exponential cutoff is most widely used.
	
	\subsection{Network settings and assumptions}

	We consider two typical classes of networks: homogeneous and inhomogeneous. The formal definitions are as follows.
	
	\paragraph{Homogeneous network setting} The degree distribution of $G$ is a zero-truncated Poisson distribution with mean $\bar d$. 
	
	\paragraph{Inhomogeneous network setting} The degree distribution of $G$ is a Pareto distribution with an exponential cutoff with rate $\lambda$, shape $\zeta$, lower bound $d_L$, upper bound $N_v-1$ and mean $\bar d$.
	
	\begin{remark}
		The degree distribution is the probability distribution of the degrees over the whole network.
	\end{remark}
	
	\begin{remark}
		Note that $\bar d$, $\lambda$ and $d_L$ depend on $N_v$. For notational simplicity, we omit $N_v$.
	\end{remark}
	
	\begin{remark}
		In the inhomogeneous network setting,  by the definition of Pareto distribution with an exponential cutoff, the parameters $\lambda$, $\zeta$, $d_L$, $\bar d$ and $N_v$ satisfy the equation
		\begin{align*}
			\bar d =\int_{d_L}^{N_v-1}x\cdot e^{-\lambda x } x^{-(\zeta+1)}dx\Big/\int_{d_L}^{N_v-1} e^{-\lambda x } x^{-(\zeta+1)}dx.
		\end{align*}
	\end{remark}

	Here we focus on a general formulation of the problem in which we make the following assumptions on networks and the treatment assignment. 
	
	\begin{assumption}[Constant marginal error probabilities]\label{a1}
		Assume that \\ $\alpha_{i,j}=\alpha$ and $\beta_{i,j}=\beta$ for all $i< j$, so the marginal error probabilities are $\mathbb P(\tilde A_{i,j}=0|A_{i,j}=1)=\beta$ and $\mathbb P(\tilde A_{i,j}=1|A_{i,j}=0)=\alpha$.
	\end{assumption}
	
	\begin{assumption}[Independent noise]\label{a2}
		The random variables $\tilde A_{i,j}$, for all $i<  j$, are conditionally independent given $A_{i,j}$.
	\end{assumption}
	
	\begin{assumption}[Large Graphs]\label{a3} The number of vertices 
		$N_v\rightarrow\infty$.
	\end{assumption}
	
	In Assumption \ref{a1}, we assume that both $\alpha$ and $\beta$ remain constant over different edges.
	Under Assumptions \ref{a1} and \ref{a2}, the distribution of $\tilde d_i$ is
	\begin{equation*} 
		\begin{aligned}
			\tilde d_i=\sum_{j=1}^{N_v} \tilde A_{j,i} \sim 
			\text{Binomial}(N_v-1-d_i,\alpha ) + \text{Binomial}(d_i,1-\beta ).
		\end{aligned}
	\end{equation*}
	Assumption \ref{a3} reflects both the fact that the study of large graphs is a hallmark of modern applied work in complex networks and, accordingly, our desire to understand asymptotic behaviors of estimators for average causal effects and provide concise descriptions in terms of biases and variances for large graphs.

	\begin{assumption}\label{a4}
		Individuals are assigned treatment independently with probability $p$ satisfying $p=o(1)$, $p=\omega(1/N_v)$, $\bar d=\Theta(1/p)$.
		Letting $C_{ij}$ denote the number of common neighbors between vertices $i$ and $j$ in $G$, $\sum_{i=1}^{N_v}\sum_{j\neq i}^{N_v} I_{\{C_{ij}=0\}} \sim N_v^2$.
		Finally, the potential outcomes are bounded, $|y_i(c_k)|\leq c<\infty$, for all values $i$ and $c_k$, where $c$ is a constant.
	\end{assumption}
	
	Assumption \ref{a4} entails that, as $N_v$ grows, the expected number of treated individuals also grows but is dominated by $N_v$ asymptotically.
	And the average number of treated neighbors is bounded.
	The amount of vertex pairs having common neighbors is also limited in scope as $N_v$ grows which ensures a sufficiently large set of independent exposures. Assumption \ref{a4} is an assumption used in proving the consistency of $\hat{\uptau}(c_k,c_l)$ in noise-free homogeneous and inhomogeneous networks. See Appendix \ref{app1} for details. 
	
	\begin{assumption}\label{a5}
		$1-\beta=\Omega(1)$, $\alpha=\Theta(1/(pN_v))$, and $\alpha=o(p)$.
		
	\end{assumption}
	\begin{remark}
		Note that $\alpha$ and $\beta$ can be constants or $o(1)$ as $N_v\rightarrow \infty$.
		For notational simplicity, we omit $N_v$.
	\end{remark}
	\begin{remark}
		Assumption \ref{a5} implies $p=\omega(1/\sqrt{N_v})$, which is consistent with Assumption \ref{a4}.
	\end{remark}
	
	By making assumptions on the underlying rates of error $\alpha$ and $\beta$, we will see that regularity conditions hold for noisy homogeneous and inhomogeneous networks in Appendix \ref{app2} .

	\subsection{Biases of standard estimators in noisy networks}\label{sec3.1}
	
	Assuming a four-level exposure model and Bernoulli random assignment of treatment, we quantify the biases of standard estimators in homogeneous and inhomogeneous network settings.  We begin with the following general result.
	
	\begin{theorem} \label{th1}
		Assume a four-level exposure model and Bernoulli random assignment of treatment with probability $p$.
		Under Assumptions \ref{a1} -- \ref{a3}, \ref{a5}, $p=o(1)$, $p=\omega(1/N_v)$ and the potential outcomes are bounded, we have
		\begin{align*}
			\text{Bias}\Big[ \tilde{\overline{y}}_{A\&S}(c_{11})\Big] =& \ -  \frac{1}{N_v} \sum_{i=1}^{N_v} \frac{(1-p)^{d_i}\big[1-(1-\alpha p)^{N_v-1-d_i}\big]}{1-(1-\alpha p)^{N_v-1-d_i}(1- (1-\beta)p  )^{d_i}}\ \uptau_i(c_{11},c_{10})+  o(1),\\
			\text{Bias}\Big[ \tilde{\overline{y}}_{A\&S}(c_{10})\Big]=  &\ \frac{1}{N_v} \sum_{i=1}^{N_v}\big[1-(1-\beta p)^{d_i}\big] \ \uptau_i(c_{11},c_{10}),\\
			\text{Bias}\Big[ \tilde{\overline{y}}_{A\&S}(c_{01})\Big] =& \ -  \frac{1}{N_v} \sum_{i=1}^{N_v} \frac{(1-p)^{d_i}\big[1-(1-\alpha p)^{N_v-1-d_i}\big]}{1-(1-\alpha p)^{N_v-1-d_i}(1- (1-\beta)p  )^{d_i}}\ \uptau_i(c_{01},c_{00}) +  o(1),\\
			\text{Bias}\Big[ \tilde{\overline{y}}_{A\&S}(c_{00})\Big]=  &\ \frac{1}{N_v} \sum_{i=1}^{N_v}\big[1-(1-\beta p)^{d_i}\big] \ \uptau_i(c_{01},c_{00}),
		\end{align*}
		as $N_v\rightarrow \infty$, where $\uptau_i(c_k,c_l)=y_i(c_k)-y_i(c_l)$ and $d_i$ is the degree of the $i$-th vertex in the noise-free network $G$.
	\end{theorem}
	
	Theorem \ref{th1} then directly leads to the following corollary in homogeneous and inhomogeneous network settings.
	
	\begin{corollary}[Homogeneous and inhomogeneous] \label{coro1}
		Assume a four-level exposure model and Bernoulli random assignment of treatment with $p$.
		In both homogeneous and inhomogeneous network settings, under Assumptions \ref{a1} -- \ref{a3}, \ref{a5}, $p=o(1)$, $p=\omega(1/N_v)$ and the potential outcomes are bounded, the bias statement in Theorem \ref{th1} holds.
	\end{corollary}
	The proof of Theorem \ref{th1} is in supplementary material C.
	Corollary \ref{coro1} directly follows from Theorem \ref{th1}.
	
	The above results show that biases of standard estimators in homogeneous and inhomogeneous network settings have the same expressions.
	Biases of $\tilde{\overline{y}}_{A\&S}(c_{11})$ and $\tilde{\overline{y}}_{A\&S}(c_{01})$ depend on both $\alpha$ and $\beta$, while biases of $\tilde{\overline{y}}_{A\&S}(c_{10})$ and $\tilde{\overline{y}}_{A\&S}(c_{00})$ only depend on $\beta$.
	Biases of $\tilde{\overline{y}}_{A\&S}(c_{11})$ and $\tilde{\overline{y}}_{A\&S}(c_{10})$ are related to $\uptau(c_{11},c_{10})$.
	And biases of $\tilde{\overline{y}}_{A\&S}(c_{01})$ and $\tilde{\overline{y}}_{A\&S}(c_{00})$ are related to $\uptau(c_{01},c_{00})$.
	These relationships follow because the network noise affects observed edges but not treatment status.
	
	\label{para}
	In proving these results, we also necessarily obtain an understanding of estimation bias for causal effects at the level of individuals, which we summarize here.  Let $\tilde y_{A\&S,i}(c_k)$ denote the Aronow and Samii estimator for $y_i(c_k)$ in noisy networks, which corresponds to the $i$-th element of $\tilde{\overline{y}}_{A\&S}(c_k)$ in (\ref{equation3.1}). We summarize in the following table the asymptotic biases of $\tilde y_{A\&S,i}(c_k)$ for high (top row) and low (bottom row) degree nodes.
	
	\begin{table}[!h]
		\caption{The asymptotic biases of $\tilde y_{A\&S,i}(c_k)$ for high (top row) and low (bottom row) degree nodes.}
		\begin{center}
			\begin{tabular}{l|llll}
				\hline  
				& $\text{Bias}[\tilde y_{A\&S,i}(c_{ 11})]$ &  $\text{Bias}[\tilde y_{A\&S,i}(c_{ 10})]$&  $\text{Bias}[\tilde y_{A\&S,i}(c_{ 01})]$ &  $\text{Bias}[\tilde y_{A\&S,i}(c_{ 00})]$\\\hline
				$d_i=\omega(1/p)$ & $ o(1)$ & $\uptau_i(c_{11},c_{10})$ & $ o(1)$& $\uptau_i(c_{01},c_{00})$ \\
				%$d_i=\Theta(1/p)$ & $\mathcal O( 1)$ & $\mathcal O( 1)$  & $\mathcal O(1)$ & $\mathcal O(1)$ \\
				$d_i=o(1/p)$ & $-\uptau_i(c_{11},c_{10})$  & $ o(1)$   & $-\uptau_i(c_{01},c_{00})$ & $ o(1)$ \\
				\hline
			\end{tabular}
		\end{center}
		\label{tab1}
	\end{table}
	We see that there are four cases where $\tilde y_{A\&S,i}(c_k)$ is asymptotically unbiased. The reason is that the corresponding entries in the expected confusion matrix (\ref{eq:confusion}) go to 0. For the other four cases, the corresponding entries in the expected confusion matrix approach 1, which leads to nontrivial biases. Note that the asymptotic biases of $\tilde y_{A\&S,i}(c_k)$ is between 0 and the corresponding $\pm\uptau_i(c_k,c_l)$ when $d_i=\Theta(1/p)$.
	
	\subsection{Variances of standard estimators in noisy networks}\label{sec3.2}
	
	We analyze the variances of standard estimators in homogeneous and inhomogeneous network settings.
	
	\begin{theorem}[Homogeneous]\label{th3}
		Assume a four-level exposure model and Bernoulli random assignment of treatment with probability $p$.
		In the homogeneous network setting, under Assumptions \ref{a1} - \ref{a5}, for all $c_k$, we have $\text{Var}[ \tilde{\overline{y}}_{A\&S}(c_{k})]=o(1)$ as $N_v\rightarrow\infty$.
	\end{theorem}
	
	\begin{theorem}[Inhomogeneous]\label{th4}
		Assume a four-level exposure model and Bernoulli random assignment of treatment with probability $p$.
		In the inhomogeneous network setting, under Assumptions \ref{a1} - \ref{a5}, $\lambda=\Theta(p)$ and $\lambda>p$, we have $\text{Var}[ \tilde{\overline{y}}_{A\&S}(c_{k})]=o(1)$ for all $c_k$ as $N_v\rightarrow\infty$.
	\end{theorem}
	
	Note that the variances go to zero as the number of nodes tends towards infinity for both cases.
	Therefore, in noisy networks, the bias would appear to be the primary concern for estimating average causal effects.

	\section{Accounting for network noise}\label{sec4}
	
	As we saw in Section \ref{sec3}, standard estimators are biased in both homogeneous and inhomogeneous network settings.
	Thus, it is important to have new estimators for bias reduction.
	We present method-of-moments estimators in Section \ref{sec4.1}, and show unbiasedness and consistency under a four-level exposure model and Bernoulli random assignment of treatment in Section \ref{sec4.2}.
	The method-of-moments estimators require either knowledge of or consistent estimators of $\alpha$ and $\beta$.
	For our numerical work in Section~\ref{sec5}, we adopt the estimators in \cite{chang2020estimation}, which require at least three replicates of the observed network.

	\subsection{Method-of-moments estimators}\label{sec4.1}
	
	We construct method-of-moments estimators (MME) by reweighting the observed outcomes based on the expected confusion matrix.
	For convenience, we denote
	\begin{align*}
		{\bm{y}}_i&= [  y_i(c_{11}),   y_i(c_{10}), y_i(c_{01}) , y_i(c_{00})]^\top,\\
		\mathbbm{1}({\bm x}_i) &=[I_{\{f({\bm Z}, {\bm x}_i)=c_{11}\}},I_{\{f({\bm Z}, {\bm x}_i)=c_{10}\}},I_{\{f({\bm Z}, {\bm x}_i)=c_{01}\}},I_{\{f({\bm Z}, {\bm x}_i)=c_{00}\}}]^\top,
		\\
		\mathbbm{1}(\tilde {\bm X}_i) &=[I_{\{f({\bm Z},\tilde {\bm X}_i)=c_{11}\}},I_{\{f({\bm Z},\tilde {\bm X}_i)=c_{10}\}},I_{\{f({\bm Z},\tilde {\bm X}_i)=c_{01}\}},I_{\{f({\bm Z},\tilde {\bm X}_i)=c_{00}\}} ]^\top.
	\end{align*}
	We then combine the observed outcome $\mathbbm{1}({\bm x}_i)^\top  {\bm{y}}_i $  and the observed exposure level into a vector, denoted by $\tilde {\bm{y}}_i$,
	\begin{equation}
		\tilde {\bm{y}}_i= \mathbbm{1}(\tilde {\bm X}_i) \cdot \mathbbm{1}({\bm x}_i)^\top {\bm{y}}_i.
	\end{equation}
	By taking the expectation with respect to treatment and network noise, we obtain
	\begin{equation}
		\mathbb{E}[\tilde {\bm{y}}_i]= \bm P_i\cdot {\bm{y}}_i,
	\end{equation}
	Note that $\bm P_i$ depends on $d_i$, $\alpha$ and $\beta$. Therefore, we use $\bm P (d_i,\alpha,\beta)$ for explicitness.

	Our method of moments estimator for ${\bm{y}}_i$ is defined as 
	\begin{equation}  \label{eq4.4}
		\tilde {{\bm{y}}}_{\text{MME},i}={\bm P} ^{-1} ( \hat d_i, \hat \alpha,\hat \beta)\cdot \tilde {\bm{y}}_i,
	\end{equation}
	where
	\begin{equation} \label{eq4.5}
		\hat d_i=\frac{\tilde d_i-(N_v-1)\hat\alpha}{1-\hat\alpha-\hat\beta}.
	\end{equation}
	%The values $\hat\alpha$ and $\hat\beta$ are consistent estimators of $\alpha$ and $\beta$, which we will define later. % CHANGE!!!
	The values $\hat\alpha$ and $\hat\beta$ are assumed to be consistent estimators of $\alpha$ and $\beta$, examples of which we provide later. If $\alpha$ and $\beta$ are known, we substitute those values for $\hat\alpha$ and $\hat\beta$ in (\ref{eq4.4}), and this does not change the asymptotic behavior we state in Section \ref{sec4.2}.
	
	We define the method-of-moments estimator for the average potential outcome $\sum_{i=1}^{N_v} {y}_i(c_k)/N_v$  
	\begin{align}\label{eq:mme}
		\begin{split}
			\tilde {\overline{ {y}}}_\text{MME}(c_k)= &\  \frac{1}{N_v}\sum_{i=1}^{N_v}\Bigg\{ \tilde {{{y}}}_{ \text{MME},i}(c_k)\cdot I_{\{\hat d_i=\Theta(1/p)\}}+\tilde {{ {y}}}_{ \text{A\&S},i}(c_k)\cdot I_{\{\hat d_i= \omega(1/p)\bigcap c_k\in\{c_{11},c_{01}\}  \}}\\
			&+\tilde {{{y}}}_{ \text{A\&S},i}(c_k)\cdot I_{\{\hat d_i=o(1/p)\bigcap c_k\in\{c_{10},c_{00}\}   \}}\Bigg\},
		\end{split}
	\end{align}
	where $\tilde {{ {y}}}_{ \text{A\&S},i}(c_k)$ is the Aronow and Samii estimator of node $i$ in the noisy network. Recall from the bias statements in Table~\ref{tab1} that $\tilde y_{A\&S,i}(c_{ 11})$ and $\tilde y_{A\&S,i}(c_{01})$ are asymptotically unbiased for nodes with high degrees.
	And $\tilde y_{A\&S,i}(c_{ 10})$ and $\tilde y_{A\&S,i}(c_{00})$ are asymptotically unbiased for small degree nodes.
	Therefore, we do not need to correct biases for those cases.
	We will show that $\tilde {{\bm{y}}}_{ \text{MME},i}$ is asymptotically unbiased with small variance for nodes with degree on the order of $1/p$ in Theorems \ref{th5} and \ref{th6}.
	Otherwise, asymptotically unbiased estimators with small variances may not exist due to the structure of this specific four-level exposure model.
	As we saw, $\mathbb E[\tilde y_{A\&S,i}(c_{ 11})]\rightarrow y_i(c_{10})$ and $\mathbb E[\tilde y_{A\&S,i}(c_{ 01})]\rightarrow y_i(c_{00})$ for small degree nodes, while $\mathbb E[\tilde y_{A\&S,i}(c_{ 10})]\rightarrow y_i(c_{11})$ and $\mathbb E[\tilde y_{A\&S,i}(c_{ 00})]\rightarrow y_i(c_{01})$ for high degree nodes.
	These means that we lose almost all information about $y_i(c_{11})$ and $y_i(c_{01})$ for small degree nodes, and $y_i(c_{10})$ and $y_i(c_{00})$ for high degree nodes.
	%For a generalized four exposure model in (\ref{eq6.2}) where exposure levels are more related to neighborhood exposure, we may be able to correct those biases without increasing variances too much.
	
	In general, we suggest to use terms of the same orders of magnitude in (\ref{eq:mme}) to approximate $\Theta(\cdot)$.  That is, writing 
	$1/p=a\times 10^b$, where $1/\sqrt{10} \leq a<\sqrt{10}$, we represent the order of magnitude with $b$.
	Next, we rewrite $\tilde {\overline{{y}}}_\text{MME}(c_k)$ as
	\begin{align*}
		\begin{split}   
			\tilde {\overline{{y}}}_\text{MME}(c_k)=&\ \frac{1}{N_v}\sum_{i=1}^{N_v}\Bigg\{ \tilde {{ {y}}}_{ \text{MME},i}(c_k)\cdot I_{\{C_1\leq\hat d_i<C_2\}}+\tilde {{ {y}}}_{ \text{A\&S},i}(c_k)\cdot I_{\{\hat d_i \geq C_2 \bigcap c_k\in\{c_{11},c_{01}\} \}}\\
			&+\tilde {{ {y}}}_{ \text{A\&S},i}(c_k)\cdot I_{\{\hat d_i<C_1 \bigcap c_k\in\{c_{10},c_{00}\} \}}\Bigg\},
		\end{split}
	\end{align*}
	where  $C_1=10^b/\sqrt{10}$  and $C_2=\sqrt{10}\cdot 10^b$.
	For sparse networks with small sample sizes, $C_1$ may be close to the average degree and thus we recommend to compute
	\begin{equation}\label{eq:mme2}
		\tilde {\overline{{y}}}_\text{MME}(c_k)= \frac{1}{N_v}\sum_{i=1}^{N_v}\Bigg\{ \tilde {{{y}}}_{ \text{MME},i}(c_k)\cdot I_{\{\hat d_i\geq 1\}}+\tilde {{{y}}}_{ \text{A\&S},i}\cdot I_{\{\hat d_i<1\}}(c_k)\Bigg\}.
	\end{equation}
	
	\begin{remark}
		As we will see later, in this specific four-level exposure model, $\tilde {\overline{ {y}}}_\text{MME}(c_k)$ is asymptotically unbiased and consistent in both homogeneous and inhomogeneous network settings.
		
	\end{remark}

	Our estimators require knowledge of or, more realistically, consistent estimates of the parameters $\alpha$ and $\beta$ governing the noise.  For our numerical work in Section~\ref{sec5}, we adopt the consistent MME estimators in \cite{chang2020estimation}, which require at least three replicates of the observed network.
	Define relevant quantities as follows:
	\begin{align*}
		u_1&= (1-\delta)\alpha+\delta(1-\beta),\\
		u_2&= (1-\delta)\alpha(1-\alpha)+\delta\beta(1-\beta),\\
		u_3&= (1-\delta)\alpha(1-\alpha)^2+\delta\beta^2(1-\beta),
	\end{align*}
	where $\delta$ is the edge density in the true network $G$, $u_1$ is the expected edge density in one observed network, $u_2$ is the expected density of edge differences in two observed networks, and $u_3$ is the average probability of having an edge between two arbitrary nodes in one observed network but no edge between the same nodes in the other two observed networks.
	The method-of-moments estimators for $u_1$, $u_2$ and $u_3$ are
	\begin{equation}\label{eq5.3}
		\begin{aligned}
			\hat u_1&=\frac{2}{N_v(N_v-1)}\sum_{i<j}\tilde A_{i,j}, \\
			\hat u_2&=\frac{1}{N_v(N_v-1)}\sum_{i<j}|\tilde A_{i,j,*}-\tilde A_{i,j}|,\\
			\hat u_3&=\frac{2}{3N_v(N_v-1)}\sum_{i<j} I( \text{Exactly one of }\tilde A_{i,j,**}, \tilde A_{i,j,*}, \tilde A_{i,j} \text{ equals } 1),
		\end{aligned}
	\end{equation} 
	where $ \tilde {\bm A}_*=(\tilde A_{i,j,*})_{N_v\times N_v},\ \tilde {\bm A}_{**}=(\tilde A_{i,j,**})_{N_v\times N_v}$ are independent and identically distributed replicates of $\tilde {\bm A}$.
	Calculation of the estimators $\hat\alpha$ and $\hat\beta$ can be accomplished as detailed in Algorithm \ref{algo1} below.
	
	\begin{algorithm}[!h] 
		\caption{Consistent estimators $\hat\alpha$ and $\hat\beta$} 
		\hspace*{0.02in} {\bf Input:} 
		$\tilde {\bm A}=(\tilde A_{i,j})_{N_v\times N_v}, \ \tilde {\bm A}_*=(\tilde A_{i,j,*})_{N_v\times N_v},\ \tilde {\bm A}_{**}=(\tilde A_{i,j,**})_{N_v\times N_v},\ \alpha_0,\ \varepsilon$\\
		\hspace*{0.02in} {\bf Output:} 
		$\hat\alpha$, $\hat\beta$ 
		\begin{algorithmic}  
			\State Compute $\hat u_1,\ \hat u_2,\  \hat u_3$ defined in (\ref{eq5.3});
			\State Initialize $\hat \alpha=\alpha_0$, $\alpha_0=\hat \alpha+10\varepsilon$;
			\While {$|\hat \alpha-\alpha_0|>\varepsilon$}
			\State $\alpha_0\gets\hat \alpha,\ \hat \beta\gets\frac{\hat u_2-\alpha_0+\hat u_1 \alpha_0}{\hat u_1-\alpha_0},\ \hat\delta\gets\frac{(\hat u_1-\alpha_0)^2}{\hat u_1-\hat u_2-2\hat u_1\alpha_0+\alpha_0^2},\ \hat\alpha\gets\frac{\hat u_3-\hat\delta\hat\beta^2(1-\hat\beta)}{(1-\hat\delta)(1-\alpha_0)^2}$.
			\EndWhile
		\end{algorithmic}
		\label{algo1}
	\end{algorithm}
	
	\subsection{Asymptotic unbiasedness, consistency and normality}\label{sec4.2}
	
	We consider the asymptotic behavior of the  method-of-moments estimators $\tilde {\overline{y}}_\text{MME}(c_k)$ as $N_v\rightarrow\infty$.
	
	\begin{theorem}[Homogeneous]\label{th5}
		Assume a four-level exposure model and Bernoulli random assignment of treatment with probability $p$.
		In the homogeneous network setting, under Assumptions \ref{a1} - \ref{a5}, $\tilde {\overline{y}}_\text{MME}(c_k)$ is an asymptotically unbiased and consistent estimator of $\overline{y}(c_k)$ for all $c_k$.
	\end{theorem}
	
	\begin{theorem}[Inhomogeneous]\label{th6}
		Assume a four-level exposure model and Bernoulli random assignment of treatment with $p$.
		In the inhomogeneous network setting, under Assumptions \ref{a1} - \ref{a5}, $\lambda=\Theta(p)$ and $\lambda>p$, $\tilde {\overline{y}}_\text{MME}(c_k)$ is an asymptotically unbiased and consistent estimator of $\overline{y}(c_k)$ for all $c_k$.
	\end{theorem}
	
	Note that $\tilde {\overline{y}}_\text{MME}(c_k)$ is an asymptotically unbiased and consistent estimator of $\overline{y}(c_k)$ in both homogeneous and inhomogeneous network settings.
	Proofs of Theorem \ref{th5} and \ref{th6} appear in supplementary material C.
	
	Next, we establish assumptions for the asymptotic normality. Let $C_{\mathcal V_1}$ and $C_{\mathcal V_2}$ denote the two-stars count and the count of 3 connected edges passing through 4 different nodes in $G$, respectively. Then, 
	\begin{align*} 
		C_{\mathcal V_1} = \sum_{\bm v= (i_1,i'_1,i_2,i'_2)\in\mathcal V_1 }  A_{ i_1,i'_1 }A_{ i_2,i'_2 } 
	\end{align*}
	and 
	\begin{align*} 
		C_{\mathcal V_2} = \sum_{\bm v= (i_1,i'_1,i_2,i'_2,i_3,i'_3)\in\mathcal V_2 }  A_{ i_1,i'_1 }A_{ i_2,i'_2 } A_{ i_3,i'_3 } 
	\end{align*}
	where $\mathcal V_1= \{ (i_1,i'_1,i_2,i'_2): i'_1=i_2,i_1\neq i_2\neq i'_2\} $ and $\mathcal V_2= \{ (i_1,i'_1,i_2,i'_2,i_3,i'_3): i'_1=i_2, i'_2=i_3,i_1\neq i_2 \neq i_3\neq i'_3\} $.
	
	\begin{assumption}\label{a6}
		$C_{\mathcal V_1}=\mathcal O(N_v^{5/3})$ and $C_{\mathcal V_2}=\mathcal O(N_v^{20/9})$.  
	\end{assumption}
	
	\begin{remark}
		Note that $C_{\mathcal V_1}=\Theta(N_v(\bar d )^2)$ in both homogeneous and inhomogeneous network settings. Thus, Assumption \ref{a6} implies $\bar d=\mathcal O(N_v^{1/3})$, which is consistent with Assumptions \ref{a4} and \ref{a5}.
	\end{remark}
	
	%Assumption \ref{a6} entails the joint independence of observations $y_i(c_k)$. And we relax the local dependence condition in \cite{aronow2017estimating}. 
	Assumption \ref{a6} is a condition on the connectivity of the true underlying network, induced by an assumption of local dependency of the observations $y_i(c_k)$. This assumption is actually a relaxation of that assumed by \cite{aronow2017estimating}. Their local dependence condition implies bounded network degrees in the four-level exposure model and Bernoulli random assignment of treatment setting. Thus, it leads to $C_{\mathcal V_1}=\mathcal O (N_v)$ and $C_{\mathcal V_2}=\mathcal O(N_v)$. Our assumption allows network degrees to grow as $N_v\rightarrow\infty$, and relaxes the upper bounds on $C_{\mathcal V_1}$ and $C_{\mathcal V_2}$.
	
	\begin{assumption}\label{a8}
		$\text{Var}(\tilde{\bar y}_\text{MME}(c_k ))=\omega(1/(\bar{d})^{2/3})$.  
	\end{assumption}
	
	Assumption \ref{a8} entails a moderate variance of our estimator, ruling out the cases where the bias is relatively large.
	
	\begin{theorem}[Homogeneous]\label{th8}
		Assume a four-level exposure model and Bernoulli random assignment of treatment with $p$.
		In the homogeneous network setting, under the conditions of Theorem \ref{th5}, as well as Assumptions  \ref{a6} and \ref{a8}, we have $\left(\tilde {\overline{y}}_\text{MME}(c_k)-\mathbb E(\tilde {\overline{y}}_\text{MME}(c_k)) \right)/\sqrt{\text{Var}(\tilde{\bar y}_\text{MME}(c_k ))}\xrightarrow{d}N(0,1)$ for all $c_k$.
	\end{theorem}
	
	\begin{theorem}[Inhomogeneous]\label{th9}
		Assume a four-level exposure model and Bernoulli random assignment of treatment with $p$.
		In the inhomogeneous network setting, under the conditions of Theorem \ref{th6}, as well as Assumptions  \ref{a6} and \ref{a8}, we have $\left(\tilde {\overline{y}}_\text{MME}(c_k)-\mathbb E(\tilde {\overline{y}}_\text{MME}(c_k))\right)/\sqrt{\text{Var}(\tilde{\bar y}_\text{MME}(c_k ))}\xrightarrow{d}N(0,1)$ for all $c_k$.
	\end{theorem}
	
	Proofs of Theorems \ref{th8} and \ref{th9} may be found in supplementary material C.  We note that while $\tilde {\overline{y}}_\text{MME}(c_k)$ is asymptotically normal in both homogeneous and inhomogeneous network settings, the bias of $\tilde {\overline{y}}_\text{MME}(c_k)$ is the driver in the inhomogeneous network setting.  This is in contrast to the homogeneous network setting, for which we have the following corollary (also proved in supplementary material C).
	\begin{corollary}[Homogeneous]\label{coro2}
		Under the conditions of Theorem~\ref{th8}, the same asymptotic behavior holds with $\mathbb E(\tilde {\overline{y}}_\text{MME}(c_k))$ replaced by ${\overline{y}}(c_k)$.
		%Assume a four-level exposure model and Bernoulli random assignment of treatment with $p$.
		%In the homogeneous network setting, under the conditions of Theorem \ref{th5}, as well as Assumptions  \ref{a6} and \ref{a8}, we have $\left(\tilde {\overline{y}}_\text{MME}(c_k)-{\overline{y}}(c_k)\right)/\sqrt{\text{Var}(\tilde{\bar y}_\text{MME}(c_k ))}\xrightarrow{d}N(0,1)$ for all $c_k$.    
	\end{corollary}
	
	%While $\tilde {\overline{y}}_\text{MME}(c_k)$ is asymptotically normal in both homogeneous and inhomogeneous network settings, the bias of $\tilde {\overline{y}}_\text{MME}(c_k)$ is the driver in the inhomogeneous network setting. The reason is that there are more high degree nodes which we are unable to reduce biases in the inhomogeneous network setting.
	%Proofs of Theorems \ref{th8}, \ref{th9} and Corollary \ref{coro2} are in supplementary material C.  
	
	\subsection{Bias and Variance estimation}\label{sec3.3}
	
	Due to the unknown structure of the true underlying network and the form of our method-of-moments estimators, it is hard to get a closed-form unbiased bias estimator and an unbiased or good conservative and variance estimator. A causal bootstrap has been developed in the context of the potential outcomes framework and under SUTVA by \cite{10.1214/20-AOS2009}, for the purpose of approximating the properties of average treatment effect estimators. A generic bootstrap method has been proposed in the context of contact networks by \cite{kucharski2018structure}, for the goal of assessing various summaries of network structure. Inspired by these two bootstrap methods, we propose bootstrap estimators for $\text{Var}(\tilde{\bar y}_\text{MME}(c_k ))$ and $\text{Bias}(\tilde{\bar y}_\text{MME}(c_k ))$. when a minimum of two replicates of the observed network are available.
	
	Suppose we have $m$ replicates $\tilde {\bm A}^{(1)}=(\tilde A^{(1)}_{i,j})_{N_v\times N_v},\cdots,\tilde {\bm A}^{(m)}=(\tilde A^{(m)}_{i,j})_{N_v\times N_v}$, where $m\geq 2$. Let $\tilde Y_i$ be the observed outcome for individual $i$. Let $\tilde y_i(c_k)$ be the potential outcome for the exposure level $c_k$ in the observed network for individual $i$. Note that $\tilde y_i(c_k)$ might not be equal to $y_i(c_k)$ because exposure levels will be misclassified in the noisy network. We denote the distribution of potential outcomes for the exposure level $c_k$ in the observed network by $\tilde F_{c_k}(x)=\frac{1}{N_v}\sum_{i=1}^{N_v}I_{\{ \tilde y_i(c_k)\leq x \}}$. The proposed bootstrap algorithm proceeds in three main steps:
	\begin{itemize}
		\item[(1)] We compute the empirical cumulative distribution $\hat F_{c_k}(x)=\frac{1}{N_{c_k} }\sum_{i=1}^{N_v}I_{\{f(\bm Z,\tilde{\bm{A}}_{\cdot,i})=c_k\} } I_{\{\tilde Y_i\leq x \}}$ from the individuals for which $f(\bm Z,\tilde{\bm{A}}_{\cdot,i})=c_k$ in the actual experiment, where $N_{c_k}$ is the number of individuals in the exposure level $c_k$ based on the observed network.
		\item[(2)] We then impute potential values $\tilde y_i(c_k)$ for each individual, which is obtained from  the estimated potential outcome distributions.
		\item[(3)] In each iteration, we construct a bootstrap resample matrix $\tilde {\bm A}^b$, assign Bernoulli random of treatment with probability $p$, and obtain the outcomes of each individual from the imputed values $\tilde y_i^b(c_k)$. We then compute our method-of-moments estimators for $\bar y(c_k)$ in the bootstrap sample obtained using the imputed potential outcomes.
	\end{itemize}
	
	Given the simulated method-of-moments estimators, we can estimate the biases and variances of $\tilde{\bar y}_\text{MME}(c_k )$ that are needed to construct confidence intervals. We next describe steps (2) and (3) in detail.

	We impute the missing counterfactuals according to:
	\begin{align}
		\tilde y_i^b(c_k)=\begin{cases}
			\tilde Y_i & \text{ if }f(\bm Z,\tilde{\bm{A}}_{\cdot,i})=c_k \text{ or } \sum_{i=1}^{N_v}I_{\{f(\bm Z,\tilde{\bm{A}}_{\cdot,i})=c_k\} }=0,\\
			\hat F^{-1}_{c_k}\left( \hat F_{f(\bm Z,\tilde{\bm{A}}_{\cdot,i})}(\tilde Y_i) \right)& \text{ otherwise}.
		\end{cases}    
	\end{align}

	For the $b$th bootstrap replication, we construct a bootstrap resample matrix $\tilde {\bm A}^b$ as follows: for entries $\tilde A^b_{i,j}$, $1\leq i<j\leq N_v$, we randomly select one of $m$ observed adjacency matrices, and use the $(i,j)$ entry of the selected matrix as the value of $\tilde A^b_{i,j}$. Then, we set the lower triangular elements equal to the corresponding upper triangular elements and force the diagonal elements to be $0$s. 
	
	We then generate $Z_{1}^b,\cdots,Z_{N_v}^b$ as independent Bernoulli draws with success probability $p$ and obtain the bootstrap sample $Y_i^b\coloneqq\tilde y_i^b(f(\bm Z^b,\tilde {\bm A}^b_{\cdot,i}) )$. Finally, we can compute the bootstrap analogs of the method-of-moments estimators $\tilde{\bar y}_\text{MME}^b(c_k )$.
	
	Repeating the resampling step $B$ times, we obtain a sample $(\tilde{\bar y}_\text{MME}^1(c_k ),\cdots,\tilde{\bar y}_\text{MME}^B(c_k ))$ that can be used to construct bias estimators $\widehat{\text{Bias}} (\tilde{\bar y}_\text{MME}(c_k ) )$ and variance estimators $\widehat{\text{Var}} (\tilde{\bar y}_\text{MME}(c_k ) )$ for tests or confidence intervals. In the homogeneous network setting, by Corollary \ref{coro2}, an approximate 95\% confidence interval for $\bar y(c_k)$ is
	\begin{align}\label{eq3.9}
		\left( \tilde{\bar y}_\text{MME}(c_k )-1.96 \sqrt{\widehat{\text{Var}} (\tilde{\bar y}_\text{MME}(c_k ) ) },\ \tilde{\bar y}_\text{MME}(c_k )+1.96 \sqrt{\widehat{\text{Var}} (\tilde{\bar y}_\text{MME}(c_k ) ) } \right) .
	\end{align}
	In the inhomogeneous network setting, by Theorem \ref{th9}, an approximate 95\% confidence interval for $\bar y(c_k)$ is
	\begin{align}\label{eq3.10}
		\begin{split}
			\bigg( &\tilde{\bar y}_\text{MME}(c_k )-\widehat{\text{Bias}} (\tilde{\bar y}_\text{MME}(c_k ) ) -1.96 \sqrt{\widehat{\text{Var}} (\tilde{\bar y}_\text{MME}(c_k ) ) },\ \tilde{\bar y}_\text{MME}(c_k )-\widehat{\text{Bias}} (\tilde{\bar y}_\text{MME}(c_k ) )\\
			&+1.96 \sqrt{\widehat{\text{Var}} (\tilde{\bar y}_\text{MME}(c_k ) ) } \bigg) .
		\end{split}
	\end{align}
	
	\section{Numerical illustration: British secondary school contact networks} \label{sec5}
	
	We conduct some simulations to illustrate the finite sample properties of the proposed estimation methods.
	We consider the data and network construction described in \cite{kucharski2018structure}.
	These data were collected from 460 unique participants across four rounds of data collection conducted between January and June 2015 in year 7 groups in four UK secondary schools, with 7,315 identifiable contacts reported in total.
	They used a process of peer nomination as a method for data collection: students were asked, via the research questionnaire, to list the six other students in year 7 at their school that they spend the most time with.
	For each pair of participants in a specific round of data collection, a single link was defined if either one of the participants reported a contact between the pair (i.e.
	there was at least one unidirectional link, in either direction).
	Our analysis focuses on the single link contact network.
	
	For each school, we construct a `true' adjacency matrix $\bm A$: if an edge occurs between a pair of vertices more than once in four rounds, we view that pair to have a true edge.
	The noisy, observed adjacency matrices $\tilde{\bm A}$, $\tilde{\bm A}_*$, $\tilde{\bm A}_{**}$ are generated according to (\ref{eq3.1}).
	We set $\alpha=0.005$ or 0.010, and $\beta=0.05,\ 0.10$, or 0.15.
	We assume that both $\alpha$ and $\beta$ are unknown.
	For treatment effects we adopt a simple model in the spirit of the ‘dilated effects’ model of Rosenbaum \citep{rosenbaum1999reduced} and suppose $y_i(c_{11})=10,\ y_i(c_{10})=7,\ y_i(c_{01})=5,\ y_i(c_{00})=1$.
	We set $p=0.1$ and explore the effect of $\alpha,\ \beta$ on the performance of estimators $\tilde{\bar y}(\cdot)$.

	We run Monte Carlo simulation of 10,000 trials and compute three kinds of estimators: Aronow and Samii estimators in noise-free networks, Aronow and Samii estimators in noisy networks, and method-of-moments estimators in noisy networks.
	For the method-of-moments estimators, we first obtain estimators $\hat\alpha$ and $\hat\beta$ by Algorithm \ref{algo1}.
	The networks are sparse with small sizes, so we compute $\tilde {\overline{{y}}}_\text{MME}(c_k)$ by (\ref{eq:mme2}).
	Also, we compute the conservative estimator of $\text{Var}(\hat {\overline{{y}}}_\text{A\&S}(c_k))$ defined in \cite{aronow2017estimating}, and apply the bootstrap algorithm presented in Section \ref{sec3.3} to obtain the bias and variance estimators for $\tilde {\overline{ {y}}}_\text{A\&S}(c_k)$ and $\tilde {\overline{ {y}}}_\text{MME}(c_k)$. Then we construct $95\%$ confidence intervals for three kinds of estimators and report the coverage rates. The network degrees in Schools 1, 2 and 3 are closer to Poisson distributions compared to power law distributions in terms of Akaike information criterion, bayesian information criteria and Kolmogorov–Smirnov statistic, while the network degree distribution of School 4 is closer to a power law distribution. Therefore, we use (\ref{eq3.9}) to construct confidence intervals for our method-of-moments estimators in Schools 1, 2 and 3, and use (\ref{eq3.10}) for School 4.  Figure \ref{fig1} shows these results for the edge error settings in Table \ref{tab3}.
	\begin{table}[!ht]
		\caption{Edge error settings in the simulation study.}
		\begin{center}
			\begin{tabular}{l|llllll}
				\hline  
				& case 1 & case 2&  case 3 &  case 4&  case 5 &  case 6\\\hline
				$\alpha$ & $0.005$ & $0.005$  & $0.005$ & $0.01$ & $0.01$ & $0.01$ \\
				$\beta$ & $0.05$  & $0.1$   & $0.15$ & $0.05$  & $0.1$   & $0.15$ \\
				\hline
			\end{tabular}
		\end{center}
		\label{tab3}
	\end{table} 
	
	\begin{figure} 
		\centering
		\caption{Biases, standard deviations, mean absolute errors of standard errors, and coverage rates of 95\% confidence intervals in four schools.}	
		\includegraphics[width=\textwidth]{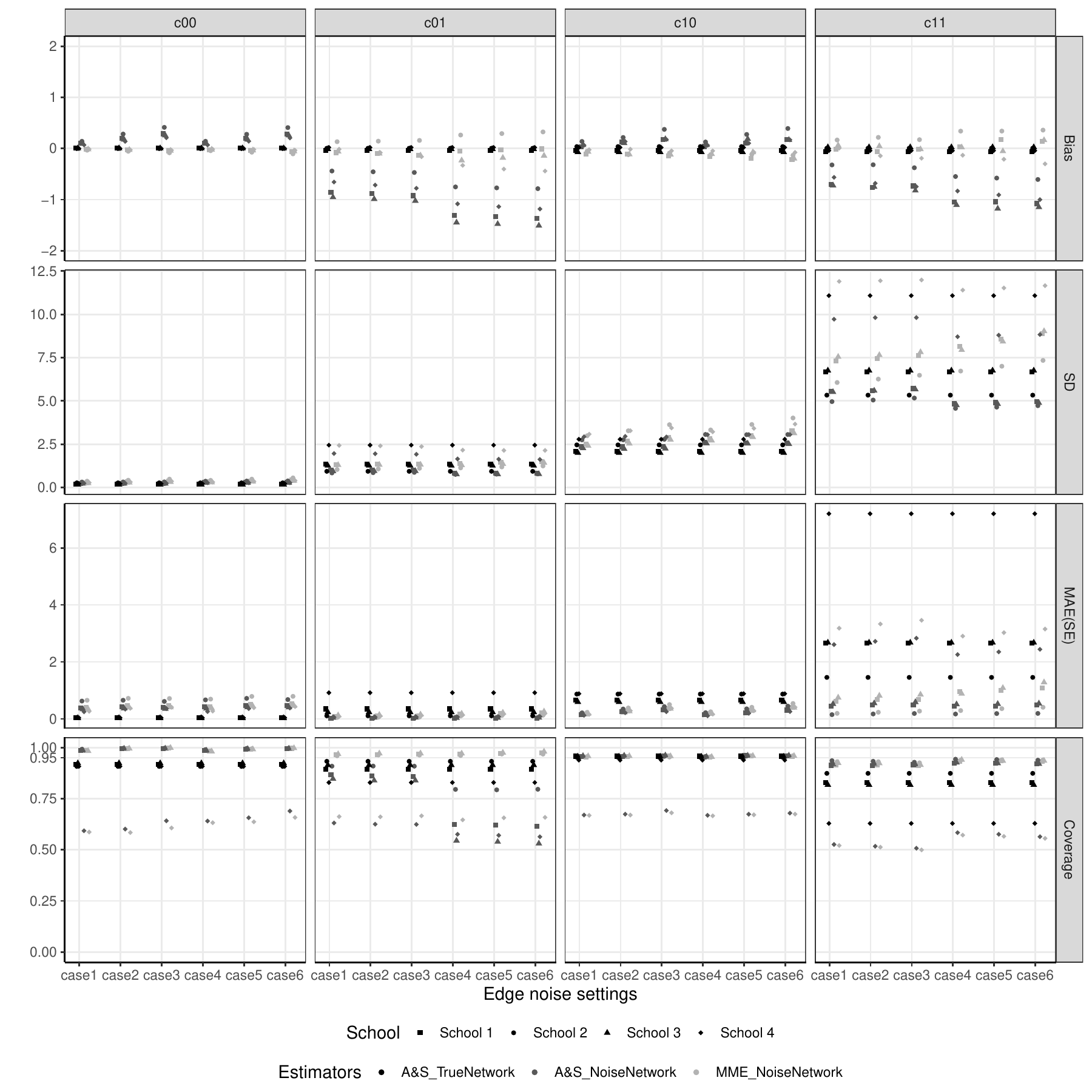}
		\label{fig1}
	\end{figure}
	
	From the plots, we see that method-of-moments estimators outperform Aronow and Samii estimators in noisy networks, and essentially perform the same on noisy networks as Aronow and Samii estimators do on noise-free networks (same zero biases with, at times, just slightly larger standard deviations). Aronow and Samii estimators in noisy networks underestimate $\bar y (c_{11})$ and $\bar y (c_{01})$ and overestimate $\bar y (c_{10})$ and $\bar y (c_{00})$.
	And the biases of Aronow and Samii estimators for $\bar y (c_{11})$ and $\bar y (c_{01})$ increase as $\alpha$ and $\beta$ increase, while the biases of Aronow and Samii estimators for $\bar y (c_{01})$ and $\bar y (c_{00})$ only depend on $\beta$.
	The biases of method-of-moments estimators are close to zero in all cases.
	In addition, standard deviations of the three types estimators are similar in all cases.
	The standard deviations of estimators in School 4 are larger than those in other schools because the network size in School 4 is relatively small. 
	
	In addition, the standard errors of method-of-moments estimators are close to the corresponding standard deviations expect for the exposure level $c_{11}$ in School 4. The network size in School 4 is small, so there is almost no individual in exposure level $c_{11}$. Thus, we are unable to impute missing counterfactuals accurately, which leads to inaccurate variance estimators. Furthermore, the variances of bias estimate are large because of the small network size in School 4. Therefore, we cannot obtain good bias estimators. The inaccurate bias and variance estimators in School 4 lead to relatively low coverage rates for method-of-moments estimators.
	
	\section{Discussion}\label{sec6}
	
	% extends to directed graphs?
	
	Here we have quantified biases and variances of standard estimators in noisy networks and developed a general framework for estimation of true average causal effects in contexts wherein one has observations of noisy networks.
	Our approach requires knowledge or consistent estimates of the corresponding noise parameters, the latter which can be obtained with as few as three replicates of network observations.  We employ method-of-moments techniques to derive estimators and establish their asymptotic unbiasedness, consistency, and normality.
	Simulations in British secondary schools contact networks demonstrate that substantial inferential accuracy by method-of-moments estimators is possible in networks of even modest size when nontrivial noise is present.

	We have pursued a frequentist approach to the problem of uncertainty quantification for estimating average causal effects.
	If the replicates necessary for our approach are unavailable in a given setting, a Bayesian approach is a natural alternative.
	For example, posterior-predictive checks for goodness-of-fit based on examination of a handful of network summary measures is common practice (e.g., \cite{bloem2018random}).
	Note, however, that the Bayesian approach requires careful modeling of the generative process underlying $G$ and typically does not distinguish between signal and noise components.
	Our analysis is conditional on $G$, and hence does not require that $G$ be modeled.
	It is effectively a `signal plus noise’ model, with the signal taken to be fixed but unknown.
	Related work has been done in the context of graphon modeling, with the goal of estimating network motif frequencies (e.g., \cite{latouche2016variational}).
	However, again, one typically does not distinguish between signal and noise components in this setting.
	Additionally, we note that the problem of practical graphon estimation itself is still a developing area of research.
	
	Our work here sets the stage for extensions to other potential outcome frameworks and exposure models. Here we sketch the key elements of one such extension.
	For example, consider the exposure mapping $f$ as following:
	\begin{equation} \label{eq6.2}
		f(\bm z,\bm A_{\cdot i})=\begin{cases}
			c_{11'} \text{(Direct + $\geq$ $m_i$ Neighborhood Exposure)}  , & z_i I_{\{\bm z^\top \bm A_{\cdot i}>0\}}\geq m_i,\\
			c_{10'} \text{(Direct + $<$ $m_i$ Neighborhood Exposure)} , & z_i I_{\{\bm z^\top \bm A_{\cdot i}>0\}}<m_i,\\
			c_{01'} \text{( $\geq m_i$ Neighborhood Exposure)}  , & (1-z_i) I_{\{\bm z^\top \bm A_{\cdot i}>0\}}\geq m_i,\\
			c_{00'}  \text{( $< m_i$ Neighborhood Exposure)} , & (1-z_i) I_{\{\bm z^\top \bm A_{\cdot i}>0\}}<m_i,\\
		\end{cases}
	\end{equation}
	where $m_i\geq 1$. 
	When $m_i=1$, it reduces to (\ref{eq2.1}). And if $m_i=k$, then the level $c_{11'}$ is known as the absolute $k$-neighborhood exposure (\cite{ugander2013graph}). When $m_i=q d_i$, $0\leq q\leq 1$, the level $c_{11'}$ is  called the fractional $q$-neighborhood exposure (\cite{ugander2013graph}).
	The generalized four-level exposure model provides useful abstractions for the analysis of networked experiments.
	%, where the degree of bias introduced depends on how well the exposure conditions approximate belonging to the counterfactual universes. has been widely used.
	For example, infectious diseases (e.g., like COVID-19) are more likely to spread between people closely connected in a social network.
	And being in contact with more people with the disease means that, in theory, they are more likely to contract the disease.
	
	As an illustration, suppose that treatment is assigned to the $N_v$ individuals in a network through Bernoulli random sampling, with probability $p$. The exposure probabilities for four levels can be found in supplementary material D.
	Next, we show orders of the exposure probabilities for nodes with varying degrees in Theorem \ref{th7}.
	If $d_i=\Theta(1/p)$, upper bounds for four exposure probabilities do not depend on $m_i$.
	When $d_i=\omega(1/p)$, upper bounds for $p_i^e(c_{10'})$ and $p_i^e(c_{00'})$ increase as $m_i$ increases.
	And upper bounds for $p_i^e(c_{11'})$ and $p_i^e(c_{01'})$ decrease as $m_i$ increases if $d_i=o(1/p)$. See supplementary material D for the proof.
	
	\begin{theorem}\label{th7}
		Assume a generalized four-level exposure model and Bernoulli random assignment of treatment with $p=o(1)$.
		And for all $i$, $d_i\geq m_i$.
		Then, for all integers $m_i\geq 1$ and $m_i=\mathcal O(1)$, the orders of exposure probabilities are as follows.%in Table \ref{tab2}.  
		
		\begin{table}[!h]\label{tab2}
			\begin{center}
				\begin{tabular}{l|llll}
					& $p_i^e(c_{11'})$ &  $p_i^e(c_{10'})$&  $p_i^e(c_{01'})$ &  $p_i^e(c_{00'})$\\\hline
					$d_i=\omega(1/p)$ & $\mathcal O(p)$ & $\mathcal O( p (d_ip)^{m_i-1}/e^{d_ip} )$ & $\mathcal O(1)$& $\mathcal O((d_ip)^{m_i-1}/e^{d_ip} )$ \\
					$d_i=\Theta(1/p)$ & $\mathcal O( p)$ & $\mathcal O( p)$  & $\mathcal O(1)$ & $\mathcal O(1)$ \\
					$d_i=o(1/p)$ & $\mathcal O( p(d_ip)^{m_i} )$  & $\mathcal O( p)$   & $\mathcal O( (d_ip)^{m_i})$ & $\mathcal O(1)$ \\
				\end{tabular}
			\end{center}
		\end{table}
	\end{theorem}
	Then, we can construct regularity conditions for the average causal effect estimators to be consistent.
	Similarly, one can quantify biases and variances of standard estimators in noisy networks and develop a general framework for estimation of true average causal effects.
	These require additional work due to the complexities of formulas for exposure probabilities.
	
	Our choice to work with independent network noise is both natural and motivated by convenience.
	A precise characterization of the noise dependency would be needed to extend our work, but is typically problem-specific and hence a topic for further investigation.
	
	\section{Supplementary Materials}
	
	\textbf{Supplementary Materials for ``Causal Inference under Network Interference with Noise"}: Providing proofs of all propositions and theorems presented in the main paper.
	
	\textbf{Data and code accessibility:} No primary data are used in this paper. Secondary data source is taken from \cite{kucharski2018structure}. These data and the code necessary to reproduce the results in this paper are available at \url{https://github.com/KolaczykResearch/CausInfNoisyNet}.
	
	\section{Acknowledgement}
	
	This work was supported in part by ARO award W911NF1810237. This work was also supported by the Air Force Research Laboratory and DARPA under agreement number FA8750-18-2-0066 and by a grant from MIT Lincoln Labs.
	
	\section{Appendix}
	
	In this appendix, we provide arguments for the consistency of contrast estimates in noise-free networks and regularity conditions in noisy networks. 
	
	\subsection{Consistency of contrast estimates in noise-free networks}\label{app1}
	
	We first establish conditions for the estimator $\hat{\uptau}(c_k,c_l)$ to converge to $\uptau(c_k,c_l)$ as $N_v\rightarrow\infty$.
	We will show that, under two regularity conditions, $\hat{\uptau}(c_k,c_l)\xrightarrow{P} \uptau(c_k,c_l)$ as $N_v\rightarrow\infty$. Note that these conditions are similar to but slightly more general than the conditions in \cite{aronow2017estimating}.
	
	\begin{condition}\label{cond1}
		For all values $i$ and $c_k$, $|y_i(c_k)|\leq c<\infty$, $p^e_i(c_k)>0$ and $\sum_{i=1}^{N_v}1/p^e_i(c_k)=o(N_v^2)$, where $c$ is a constant.
	\end{condition} 
	
	We will also make an assumption about the amount of dependence among exposure conditions in the population. Let $ p_{ij}^e(c_k)=\sum_{\bm z}p_{\bm z}I_{\{f({\bm z}, {\bm x}_i)=c_k\}}I_{\{f({\bm z}, {\bm x}_j)=c_k\}}$.

	\begin{condition}\label{cond2}
		For all values $c_k$, $\sum_{i=1}^{N_v}\sum_{j\neq i}^{N_v}| p^e_{ij}(c_k)/(p^e_i(c_k)p^e_j(c_k))-1 |=o(N_v^2)$.
	\end{condition}	
	
	Condition \ref{cond2} implies that the amount of pairwise clustering in exposure conditions is limited in scope as $N_v$ grows.
	Condition \ref{cond2} can be relaxed, though Condition \ref{cond1} would likely need to be strengthened accordingly.
	
	\begin{proposition}
		Given Conditions \ref{cond1} and \ref{cond2}, $\hat{\uptau}(c_k,c_l)\xrightarrow{P} \uptau(c_k,c_l)$ as $N_v\rightarrow\infty$.
	\end{proposition}
	
	Assuming the four-level exposure model in (\ref{eq2.1}) and Bernoulli random assignment of treatment with probability $p$, we consider the consistency of the estimator $\hat{\uptau}(c_k,c_l)$ in two typical classes of networks: homogeneous and inhomogeneous.
	
	\begin{proposition}[Homogeneous]\label{pro1}
		Assume a four-level exposure model and Bernoulli random assignment of treatment with $p$.
		In the homogeneous network setting, under Assumption \ref{a4}, $\hat{\uptau}(c_k,c_l)\xrightarrow{P} \uptau(c_k,c_l)$ as $N_v\rightarrow\infty$.
	\end{proposition}
	
	\begin{proposition}[Inhomogeneous]\label{pro2}
		Assume a four-level exposure model and Bernoulli random assignment of treatment with $p$.
		In the inhomogeneous network setting, under Assumption \ref{a4}, $\lambda=\Theta(p)$ and $\lambda>p$, we have $\hat{\uptau}(c_k,c_l)\xrightarrow{P} \uptau(c_k,c_l)$ as $N_v\rightarrow\infty$.
	\end{proposition}  
	
	Proofs for Propositions 1 -- 3  appear in the supplementary material A.

	Note that, under Assumption \ref{a4}, Condition \ref{cond1}  does not hold for levels $c_{10}$ and $c_{00}$ when the degrees follow a Pareto distribution with shape $\zeta>1$.
	This is because there are more high degree nodes, and $1/p_i^e(c_{10})$ and $1/p_i^e(c_{00})$ increase exponentially when the degree $d_i$ increases.
	See supplementary material E for the proof.
	
	\subsection{Standard estimators in noisy networks}\label{app2}

	Recall that under the Condition \ref{cond1} and \ref{cond2}, $\hat{\uptau}(c_k,c_l)\xrightarrow{P} \uptau(c_k,c_l)$ as $N_v\rightarrow\infty$.
	By making assumptions on underlying rates of error $\alpha$ and $\beta$, we will show that similar regularity conditions hold for noisy homogeneous and inhomogeneous networks.
	These conditions will then be used in our characterization of bias and variance in Sections \ref{sec3.1} and \ref{sec3.2}.
	Define $\tilde p_{ij}^e(c_k)=\sum_{\bm z}p_{\bm z}I_{\{f({\bm z}, \tilde{\bm X}_i)=c_k\}}I_{\{f({\bm z}, \tilde{\bm X}_j)=c_k\}}$.

	\begin{proposition}[Homogeneous]\label{pro3}
		Assume a four-level exposure model and Bernoulli random assignment of treatment with $p$.
		In the homogeneous network setting, under Assumptions \ref{a1} - \ref{a5}, for all values $i$ and $c_k$, $\mathbb P(\tilde p^e_i(c_k)>0)\rightarrow 1$, $\mathbb E[ \sum_{i=1}^{N_v}I_{\{\tilde p^e_i(c_k)>0\} }/\tilde p^e_i(c_k)]=o(N_v^2)$, and $\mathbb E[\sum_{i=1}^{N_v}\sum_{j\neq i}^{N_v}I_{\{\tilde p^e_i(c_k)>0\} }I_{\{\tilde p^e_j(c_k)>0\} }| \tilde p^e_{ij}(c_k)/(\tilde p^e_i(c_k)\tilde p^e_j(c_k))-1 |]=o(N_v^2)$.
	\end{proposition}	
	
	\begin{proposition}[Inhomogeneous]\label{pro4}
		Assume a four-level exposure model and Bernoulli random assignment of treatment with $p$.
		In the inhomogeneous network setting, under Assumptions \ref{a1}- \ref{a5}, $\lambda=\Theta(p)$ and $\lambda>p$,  the statements in Proposition \ref{pro3} hold for all values $i$ and $c_k$.
	\end{proposition} 
	See supplementary material B for proofs of Propositions 4 and 5.

\end{document}

% --- supplement: supplement.tex ---

	%\bibliographystyle{natbib}
	
	\def\spacingset#1{\renewcommand{\baselinestretch}%
		{#1}\small\normalsize} \spacingset{1}

	%%%%%%%%%%%%%%%%%%%%%%%%%%%%%%%%%%%%%%%%%%%%%%%%%%%%%%%%%%%%%%%%%%%%%%%%%%%%%%
	
	\if1\blind
	{
		\title{\bf Supplementary Materials for ``Causal Inference under Network Interference with Noise"}
		\author{Wenrui Li\thanks{\textit{Contact:} Wenrui Li, wenruili@bu.edu,
				Department of Mathematics and Statistics, Boston University, 111 Cummington Mall, Boston, MA 02215, USA.}\\
			Department of Mathematics \& Statistics, Boston University\\
			Daniel L. Sussman \\
			Department of Mathematics \& Statistics, Boston University\\
			and \\
			Eric D. Kolaczyk\hspace{.2cm}\\
			Department of Mathematics \& Statistics, Boston University}
		\maketitle
	} \fi
	
	\if0\blind
	{
		\bigskip
		\bigskip
		\bigskip
		\begin{center}
			{\LARGE\bf Supplementary Materials for ``Causal Inference under Network Interference with Noise"}
		\end{center}
		\medskip
	} \fi

	\bigskip

	\noindent%
	\textbf{Summary:} In this Supplementary Materials document, we provide proofs of all propositions and theorems presented in the main paper.
	\vfill

	\appendix
	
	\newpage
	\spacingset{1.5} % DON'T change the spacing!
	\section{Proofs of propositions for noise-free networks}
	
	\subsection{Proof of Proposition 1}
	
	Notice that
	
	\begin{align*}
	\text{Var}\Big[ \hat{\overline{y}}(c_k)\Big]=&\ \frac{1}{N_v^2}\Bigg\{\sum_{i=1 }^{N_v}p_i^e(c_k)\big [1-p_i^e(c_k)\big]\Big[\frac{y_i(c_k)}{p_i^e(c_k)}\Big]^2\\
	&+ \sum_{i=1}^{N_v} \sum_{j\neq i } \big [p_{ij}^e(c_k)-p_i^e(c_k)p_j^e(c_k)\big]\frac{y_i(c_k)}{p_i^e(c_k)}\frac{y_j(c_k)}{p_j^e(c_k)}\Bigg\}.
	\end{align*}
	Thus, under Condition 1 and 2, we have $\text{Var}[ \hat{\overline{y}}(c_k)]=o(1)$ and $\text{Var}[ \hat{\overline{y}}(c_l)]=o(1)$. 
	
	Then, by Cauchy-Schwarz inequality, we obtain
	\begin{align*}
	\text{Var} \Big[\hat{\uptau}(c_k,c_l)\Big]&= \text{Var}\Big[ \hat{\overline{y}}(c_k)\Big]+ \text{Var}\Big[ \hat{\overline{y}}(c_l)\Big]-2\ \text{Cov}\Big[ \hat{\overline{y}}(c_k),\hat{\overline{y}}(c_l)\Big]\\
	&\leq \text{Var}\Big[ \hat{\overline{y}}(c_k)\Big]+ \text{Var}\Big[ \hat{\overline{y}}(c_l)\Big]+2\sqrt{  \text{Var}\Big[ \hat{\overline{y}}(c_k)\Big]\text{Var}\Big[ \hat{\overline{y}}(c_l)\Big]}.
	\end{align*}
	Thus, we obtain $\text{Var} [\hat{\uptau}(c_k,c_l)]=o(1)$. Since $\mathbb E [\hat{\uptau}(c_k,c_l)]=\uptau(c_k,c_l)$, we have $\hat{\uptau}(c_k,c_l)\xrightarrow{L_2}\uptau(c_k,c_l)$. This implies $\hat{\uptau}(c_k,c_l)\xrightarrow{P}\uptau(c_k,c_l)$ as $N_v\rightarrow\infty$.
	
	\subsection{Proof of Proposition 2}
	
	By Proposition 1, it suffices to show Condition 1 and 2 are satisfied. 
	\iffalse
	Note that 
	\begin{align*}
	\frac{1}{N_v}\sum_{i=1}^{N_v}1/p^e_i(c_{11})&=\frac{1}{N_v}\sum_{i=1}^{N_v}\frac{1}{ p  \{1-(1-p)^{ d_i} \} }=\sum_{l=1}^{N_v-1} \frac{1}{ p  \{1-(1-p)^{ l}  \} } \cdot \frac{\sum_{i=1}^{N_v} I\{d_i =l\} }{N_v}\\
	&=\sum_{l=1}^{N_v-1} \frac{1}{ p  \{1-(1-p)^{ l}  \} } \cdot \mathbb P(l),
	\end{align*}	
	where $\mathbb P(l)$ is the degree distribution. Similarly, we obtain
	\begin{align*}
	\frac{1}{N_v}\sum_{i=1}^{N_v}1/p^e_i(c_{10})&= \sum_{l=1}^{N_v-1} \frac{1}{ p (1-p)^{ l} } \cdot \mathbb P(l) ,  \\
	\frac{1}{N_v}\sum_{i=1}^{N_v}1/p^e_i(c_{01})&=\sum_{l=1}^{N_v-1} \frac{1}{   \{1-(1-p)^{ l}  \} } \cdot \mathbb P(l),\\
	\frac{1}{N_v}\sum_{i=1}^{N_v}1/p^e_i(c_{00})&= \sum_{l=1}^{N_v-1} \frac{1}{  (1-p)^{ l+1} } \cdot \mathbb P(l) .
	\end{align*}
	\fi
	For notational simplicity, we define 
	$\mathbb E[1/p^e(c_k)]\coloneqq \frac{1}{N_v}\sum_{i=1}^{N_v}1/p^e_i(c_k)$, i.e., the average of the inverse of the exposure probability over a finite population $N_v$.
	
	Under Assumption 4, we obtain
	\begin{align*}
	\mathbb E[1/p^e(c_{10})]&=\frac{e^{\bar d /(1-p)}-1 }{(e^{\bar d} -1)p}=\Theta(1/p),\\
	\mathbb E[1/p^e(c_{00})]&=\frac{e^{\bar d /(1-p)}-1 }{(e^{\bar d} -1)(1-p)}=\mathcal O(1).
	\end{align*}	
	
	Note that $\text{Var}[1-(1-p)^d]=o(1)$, where $d$ is the degree variable. Thus, $1-(1-p)^d\xrightarrow{L_2} \lim_{N_v\rightarrow\infty}\mathbb E[ 1-(1-p)^d]$. This implies $1-(1-p)^d\xrightarrow{P} \lim_{N_v\rightarrow\infty}\mathbb E[ 1-(1-p)^d]$. By the continuous mapping theorem, we have $\dfrac{p}{1-(1-p)^d }\xrightarrow{P}\lim_{N_v\rightarrow\infty}\dfrac{p}{\mathbb E[ 1-(1-p)^d]}$. Thus, we obtain
	\begin{align*}
	\lim_{N_v\rightarrow\infty} \mathbb E\Big[	\dfrac{p}{1-(1-p)^d }\Big] =\lim_{N_v\rightarrow\infty}\dfrac{p}{\mathbb E[ 1-(1-p)^d]}.
	\end{align*}
	Since $\mathbb E[ 1-(1-p)^d]=\mathcal O(1)$, we have $\mathbb E[1/p^e(c_{11})]=\Theta(1/p)$ and $\mathbb E[1/p^e(c_{01})]=\mathcal O(1)$. Therefore, Condition 1 is satisfied. 
	
	Next, we show Condition 2 is also satisfied. Define a pairwise dependency indicator $g_{ij}$ such that if $g_{ij}=0$, then $f({\bm Z}, {\bm A}_{\cdot i} )\independent f({\bm Z}, {\bm A}_{\cdot j})$, else let $g_{ij}=1$. Note that $g_{ij}=1$ if $C_{ij}\geq 1$ or $A_{i,j}=1$. Then, we have $\sum_{i=1}^{N_v}\sum_{j\neq i}^{N_v}g_{ij}=o(N_v^2)$ because $\sum_{i=1}^{N_v}\sum_{j\neq i}^{N_v} I_{\{C_{ij}=0\}} \sim N_v^2$ and $\sum_{i=1}^{N_v}\sum_{j\neq i}^{N_v} I_{\{A_{i,j}=0 \}}\sim N_v^2$. Notice that 
	\begin{align*}
	\sum_{i=1}^{N_v}\sum_{j\neq i}^{N_v} | p^e_{ij}(c_k)/(p^e_i(c_k)p^e_j(c_k))-1 |=&
	\sum_{i=1}^{N_v}\sum_{j\neq i}^{N_v}g_{ij}| p^e_{ij}(c_k)/(p^e_i(c_k)p^e_j(c_k))-1 |\\
	\leq &\sum_{i=1}^{N_v}\sum_{j\neq i}^{N_v}g_{ij}p^e_{ij}(c_k)/(p^e_i(c_k)p^e_j(c_k))+\sum_{i=1}^{N_v}\sum_{j\neq i}^{N_v}g_{ij}.
	\end{align*}
	It suffices to show 
	\begin{align}\label{eq1}
	\sum_{i=1}^{N_v}\sum_{j\neq i}^{N_v}\frac{g_{ij} p^e_{ij}(c_k)}{p^e_i(c_k)p^e_j(c_k)}=o(N_v^2).
	\end{align}
	By Young's inequality, we have
	\begin{align*}
	\sum_{i=1}^{N_v}\sum_{j\neq i}^{N_v}\frac{g_{ij} p^e_{ij}(c_k)}{p^e_i(c_k)p^e_j(c_k)}\leq \Bigg( \sum_{i=1}^{N_v}\sum_{j\neq i}^{N_v} g_{ij}\Bigg)^{1/2}\Bigg( \sum_{i=1}^{N_v}\sum_{j\neq i}^{N_v} \Big[ \frac{ p^e_{ij}(c_k)}{p^e_i(c_k)p^e_j(c_k)}\Big]^2\Bigg)^{1/2}.
	\end{align*}
	For the level $c_{00}$, we have 
	\begin{align*}
	\sum_{i=1}^{N_v}\sum_{j\neq i}^{N_v} \Big[ \frac{ p^e_{ij}(c_{00})}{p^e_i(c_{00})p^e_j(c_{00})}\Big]^2\leq  \sum_{i=1}^{N_v}\sum_{j\neq i}^{N_v} \Big[ \frac{ 1}{p^e_i(c_{00})p^e_j(c_{00})}\Big]^2\leq  \Bigg( \sum_{i=1}^{N_v} \Big[ \frac{ 1}{p^e_i(c_{00})}\Big]^2 \Bigg)^2
	\end{align*}
	and $\mathbb E [1/(p^e(c_{00}))^2 ]=\mathcal O(1)$. Thus, we obtain 
	\begin{align*}
	\sum_{i=1}^{N_v}\sum_{j\neq i}^{N_v}| p^e_{ij}(c_{00})/(p^e_i(c_{00})p^e_j(c_{00}))-1 |=o(N_v^2).
	\end{align*}
	Similarly, we can show (\ref{eq1}) for other exposure levels. Therefore, Condition 2 is satisfied. 
	
	\subsection{Proof of Proposition 3}
	First, we compute the normalization parameter of the Pareto distribution with an exponential cutoff. By definitions, we have
	\begin{align*}
	C(\zeta, d_L, \lambda ) \int_{d_L}^{N_v-1} e^{- \lambda x} x^{-(\zeta+1)}dx &=1, \\
	C(\zeta, d_L, \lambda) \int_{d_L}^{N_v-1} x\cdot e^{- \lambda x } x^{-(\zeta+1)}dx &=\bar d.
	\end{align*}
	Note that, by integration by parts, we have
	\begin{align*}
	\int_{d_L}^{N_v-1} x\cdot e^{- \lambda x} x^{-(\zeta+1)}dx =&\ \frac{1}{\lambda}\cdot \big( d_L^{-\zeta} \cdot e^{-\lambda d_L}-(N_v-1)^{-\zeta} \cdot e^{-\lambda(N_v-1) }\big)\\
	&-\frac{\zeta  }{\lambda} \cdot\int_{d_L}^{N_v-1} e^{- \lambda x } x^{-(\zeta+1)}dx .
	\end{align*}
	Therefore, we obtain 
	\begin{align}
	C(\zeta, d_L, \lambda) &=(\lambda\bar d+\zeta) \big/\big(d_L^{-\zeta} \cdot e^{-\lambda d_L } -(N_v-1)^{-\zeta} \cdot e^{-\lambda (N_v-1) }\big) \label{eq:A2} ,\\
	\int_{d_L}^{N_v-1} e^{- \lambda x} x^{-(\zeta+1)}dx  &=  \big(d_L^{-\zeta} \cdot e^{-\lambda d_L } -(N_v-1)^{-\zeta} \cdot e^{-\lambda (N_v-1) }\big) \big/(\lambda\bar d+\zeta)\label{eq:A3} .
	\end{align}
	
	Next, we show $d_L=\Theta(\bar d)$. By definitions of $d_L$ and $\bar d$, we have $d_L=\mathcal O(\bar d)$. Therefore, it suffices to show that $d_L= \Omega(\bar d)$. We prove it by contradiction. Assume $d_L= o(\bar d)$, then by (\ref{eq:A3}), we have
	\begin{align}
	\int_{d_L}^{N_v-1} e^{- \lambda x} x^{-(\zeta+1)}dx  \sim d_L^{-\zeta} /(\lambda\bar d+\zeta).\label{eq:A4} 
	\end{align}
	On the other hand, note that  
	\begin{align}
	\int_{d_L}^{N_v-1} e^{- \lambda x} x^{-(\zeta+1)}dx =\lambda^\zeta\bigg\{\Gamma\big(-\zeta,\lambda d_L \big) -  \Gamma\big(-\zeta,\lambda (N_v-1) \big) \bigg\}\label{eq:A5} 
	\end{align}
	where $\Gamma(\cdot,\cdot)$ is the upper incomplete gamma function. Recall the properties of the upper incomplete gamma function, $\Gamma(s,x)\rightarrow -x^s/s$ as $x\rightarrow 0$ and $s<0$, and $\Gamma(s,x)\rightarrow x^{s-1}e^{-x}$ as $x\rightarrow \infty$ (\cite{jameson2016incomplete}, \cite{olver1997asymptotics}, \cite{temme2011special}). Then by  (\ref{eq:A5}), we obtain 
	\begin{align}
	\int_{d_L}^{N_v-1} e^{- \lambda x} x^{-(\zeta+1)}dx \sim d_L^{-\zeta} / \zeta ,
	\end{align}
	which is in contradiction to (\ref{eq:A4}). Thus, we have $d_L=\Theta(\bar d)$.

	Then, we prove Proposition 3. By Proposition 1, it suffices to show Condition 1 and 2 are satisfied.  
	Under Assumption 4 and $p<\lambda$, we obtain
	\begin{align*}
	\mathbb E[1/p^e(c_{10})]& =\frac{1}{p}\cdot C(\zeta, d_L, \lambda) \cdot \int_{d_L}^{N_v-1} (1-p)^{-x} e^{- \lambda x} x^{-(\zeta+1)}dx  \\
	& =\mathcal O\Bigg(  \frac{1}{p}\cdot C(\zeta, d_L, \lambda) \cdot d_L^{-(\zeta+1)} \cdot \int_{d_L}^{N_v-1} (1-p)^{-x} e^{- \lambda x} dx \Bigg)  \\
	&=\mathcal O\Bigg(\frac{1 }{pd_L[\lambda+ \log(1-p)  ]} \Big\{\Big[\frac{1}{e^{\lambda}(1-p)}\Big]^{d_L}-\Big[\frac{1}{e^{\lambda}(1-p)}\Big]^{N_v}\Big\} \Bigg) \\
	&=\mathcal O(1/p). 
	\end{align*}	
	Similarly, we have $\mathbb E[1/p^e(c_{00})]=\mathcal O(1)$.
	
	Next, we show $\mathbb E[1/p^e(c_{11})]=\mathcal O(1/p)$ and $\mathbb E[1/p^e(c_{01})]=\mathcal O(1)$. Note that 
	\begin{align*}
	\mathbb E[1/p^e(c_{11})]& =\frac{1}{p}\cdot C(\zeta, d_L, \lambda) \cdot \int_{d_L}^{N_v-1} \frac{1}{1-(1-p)^x } \ e^{- \lambda x} x^{-(\zeta+1)}dx  \\
	& =\mathcal O\Bigg(  \frac{1}{p}\cdot C(\zeta, d_L, \lambda) \cdot d_L^{-(\zeta+1)} \cdot \int_{d_L}^{N_v-1} \frac{1}{1-(1-p)^x } \ e^{- \lambda x} dx \Bigg)  \\
	& =\mathcal O\Bigg(  \frac{1}{p}\cdot C(\zeta, d_L, \lambda) \cdot d_L^{-(\zeta+1)} \cdot \int_{d_L}^{N_v-1}  e^{- \lambda x} dx \Bigg)  \\
	& =\mathcal O\Bigg(  \frac{1}{p}\cdot C(\zeta, d_L, \lambda) \cdot d_L^{-(\zeta+2)}    \Bigg)  \\
	&=\mathcal O(1/p). 
	\end{align*}
	Similarly, we have $\mathbb E[1/p^e(c_{01})]=\mathcal O(1)$.
	
	Analogous to the proof of Proposition 2, we can show Condition 2 is also satisfied.   
	
	\section{Proofs of propositions for noisy networks}
	
	\subsection{Proof of Proposition 4}
	In the observed network, the four exposure probabilities become:
	\begin{align*}
	\tilde p_i^e(c_{11})&=p\big \{1-(1-p)^{\tilde d_i} \big \}, \\
	\tilde p_i^e(c_{10})&=p(1-p)^{\tilde d_i},\\
	\tilde p_i^e(c_{01})&=(1-p) \big \{1-(1-p)^{\tilde d_i} \big \}, \\
	\tilde p_i^e(c_{00})&=(1-p)^{\tilde d_i+1},
	\end{align*}
	where $\tilde d_i$ is the degree of vertex $i$ in the observed network. 
	
	(i) For all values $i$ and $c_k$, $\mathbb P(\tilde p^e_i(c_k)>0)\rightarrow 1$ as $N_v\rightarrow\infty$.
	
	Note that $\mathbb P(\tilde p^e_i(c_{10})>0)=\mathbb P(\tilde p^e_i(c_{00})>0)=1$ and $\mathbb P(\tilde p^e_i(c_{11})>0)=\mathbb P(\tilde p^e_i(c_{01})>0)=\mathbb P(\tilde d_i>0)$. Under Assumption 1 - 5, we have
	\begin{align*}
	\lim_{N_v\rightarrow\infty}\mathbb P(\tilde d_i>0)=\lim_{N_v\rightarrow\infty}1-(1-\alpha)^{N_v-1-d_i}\beta^{d_i}=1.
	\end{align*}
	Thus, for all values $i$ and $c_k$, we have $\mathbb P(\tilde p^e_i(c_k)>0)\rightarrow 1$ as $N_v\rightarrow\infty$.
	
	(ii) For all $c_k$, $\mathbb E[ \sum_{i=1}^{N_v}I_{\{\tilde p^e_i(c_k)>0\} }/\tilde p^e_i(c_k)]=o(N_v^2)$.
	
	Define $\check d_i=\sum_{j=1}^{N_v} \tilde{A}_{j,i}A_{j,i}$. We note that $\tilde d_i=(\tilde d_i-\check d_i)+\check d_i$, where $\check d_i$ and $\tilde d_i-\check d_i$ are two independent binomial random variables. Thus, we have $\tilde d_i\sim \text{Binomial}(N_v-1-d_i,\alpha ) + \text{Binomial}(d_i,1-\beta )$. Then, we obtain
	\begin{align*}
	\mathbb{E}[(1-p)^{\tilde d_i}]=&\ (1-\alpha p)^{N_v-1-d_i}\big[1-(1-\beta)p\big]^{d_i},\\
	\text{Var}[(1-p)^{\tilde d_i}]=&\ \big[1-\alpha p(2-p)\big]^{N_v-1-d_i}\big[1-(1-\beta)p(2-p)\big]^{d_i}\\
	&-(1-\alpha p)^{2(N_v-1-d_i)}\big[1-(1-\beta)p\big]^{2d_i},\\
	\mathbb{E}[(1-p)^{-\tilde d_i}]&=\Big(1+\frac{\alpha p}{1-p}\Big)^{N_v-1-d_i}\Big(1+\frac{(1-\beta)p}{1-p}\Big)^{d_i}.
	\end{align*}	
	Since $\mathbb{E}[(1-p)^{-\tilde d_i}]=\mathcal O((1-p)^{- d_i})$, we have $\mathbb E[1/\tilde p_i^e(c_{10})]=\mathcal O(1/ p_i^e(c_{10}))$ and $\mathbb E[1/\tilde p_i^e(c_{00})]=\mathcal O(1/ p_i^e(c_{00}))$. Therefore, by Proposition 2, we obtain 
	$\mathbb E[ \sum_{i=1}^{N_v}I_{\{\tilde p^e_i(c_{10})>0\} }/\tilde p^e_i(c_{10})]=o(N_v^2)$ and $\mathbb E[ \sum_{i=1}^{N_v}I_{\{\tilde p^e_i(c_{00})>0\} }/\tilde p^e_i(c_{00})]=o(N_v^2)$.
	
	Note that $\text{Var}[1-(1-p)^{\tilde d_i}]=o(1)$ and $I_{\{\tilde{d}_i>0\} }\xrightarrow{P} 1$. Similar to the proof of Condition 1 satisfied in Proposition 2, we have
	\begin{align*}
	\lim_{N_v\rightarrow\infty} \mathbb E\Big[	\dfrac{p}{1-(1-p)^{\tilde d_i} }I_{\{\tilde{d}_i>0\} }\Big] =\lim_{N_v\rightarrow\infty}\dfrac{p}{\mathbb E[ 1-(1-p)^{\tilde d_i}]}.
	\end{align*}
	Since $\mathbb E[ 1-(1-p)^{\tilde d_i}]=\Theta(1-(1-p)^{d_i})$, we have $\mathbb E[I_{\{\tilde p^e_i(c_{11})>0\} }/\tilde p_i^e(c_{11})]=\mathcal O(1/ p_i^e(c_{11}))$ and $\mathbb E[I_{\{\tilde p^e_i(c_{01})>0\} }/\tilde p_i^e(c_{01})]=\mathcal O(1/ p_i^e(c_{01}))$. By Proposition 2, we obtain $\mathbb E[ \sum_{i=1}^{N_v}I_{\{\tilde p^e_i(c_{11})>0\} }/\tilde p^e_i(c_{11})]=o(N_v^2)$ and $\mathbb E[ \sum_{i=1}^{N_v}I_{\{\tilde p^e_i(c_{10})>0\} }/\tilde p^e_i(c_{10})]=o(N_v^2)$. 
	
	(iii) For all $c_k$, $\mathbb E[\sum_{i=1}^{N_v}\sum_{j\neq i}^{N_v}I_{\{\tilde p^e_i(c_k)>0\} }I_{\{\tilde p^e_j(c_k)>0\} }| \tilde p^e_{ij}(c_k)/(\tilde p^e_i(c_k)\tilde p^e_j(c_k))-1 |]=o(N_v^2)$. 
	
	Define a pairwise dependency indicator $\tilde g_{ij}$ in the observed network such that if $\tilde g_{ij}=0$, then $f({\bm Z}, {\tilde{ \bm A}}_{\cdot i})\independent f({\bm Z}, {\tilde{ \bm A}}_{\cdot j}  )$, else let $\tilde g_{ij}=1$. Then, we have $\tilde p^e_{ij}(c_k)=\tilde p^e_i(c_k)\tilde p^e_j(c_k)$ if $\tilde g_{ij}=0$. Thus, we obtain  
	
	\begin{align*}
	&\ \mathbb E[\sum_{i=1}^{N_v}\sum_{j\neq i}^{N_v}\tilde g_{ij}I_{\{\tilde p^e_i(c_k)>0\} }I_{\{\tilde p^e_j(c_k)>0\} }| \tilde p^e_{ij}(c_k)/(\tilde p^e_i(c_k)\tilde p^e_j(c_k))-1 |]\\
	=&\ \mathbb E[\sum_{i=1}^{N_v}\sum_{j\neq i}^{N_v} I_{\{\tilde p^e_i(c_k)>0\} }I_{\{\tilde p^e_j(c_k)>0\} }| \tilde p^e_{ij}(c_k)/(\tilde p^e_i(c_k)\tilde p^e_j(c_k))-1 |].
	\end{align*}

	By the triangle inequality, we have
	\begin{align*} 
	&\sum_{i=1}^{N_v}\sum_{j\neq i}^{N_v}\tilde g_{ij}I_{\{\tilde p^e_i(c_k)>0\} }I_{\{\tilde p^e_j(c_k)>0\} }| \tilde p^e_{ij}(c_k)/(\tilde p^e_i(c_k)\tilde p^e_j(c_k))-1 |\\
	\leq & \sum_{i=1}^{N_v}\sum_{j\neq i}^{N_v}\tilde g_{ij}I_{\{\tilde p^e_i(c_k)>0\} }I_{\{\tilde p^e_j(c_k)>0\} } \tilde p^e_{ij}(c_k)/(\tilde p^e_i(c_k)\tilde p^e_j(c_k))+\sum_{i=1}^{N_v}\sum_{j\neq i}^{N_v}\tilde g_{ij}I_{\{\tilde p^e_i(c_k)>0\} }I_{\{\tilde p^e_j(c_k)>0\} }.
	\end{align*}
	Note that $\sum_{i=1}^{N_v}\sum_{j\neq i}^{N_v}\tilde g_{ij}I_{\{\tilde p^e_i(c_k)>0\} }I_{\{\tilde p^e_j(c_k)>0\} }\leq \sum_{i=1}^{N_v}\sum_{j\neq i}^{N_v}\tilde g_{ij} $. Thus, it suffices to show $\mathbb E[\sum_{i=1}^{N_v}\sum_{j\neq i}^{N_v}\tilde g_{ij}I_{\{\tilde p^e_i(c_k)>0\} }I_{\{\tilde p^e_j(c_k)>0\} }\tilde p^e_{ij}(c_k)/(\tilde p^e_i(c_k)\tilde p^e_j(c_k))]=o(N_v^2)$ and $\mathbb E[\sum_{i=1}^{N_v}\sum_{j\neq i}^{N_v}\tilde g_{ij}]=o(N_v^2)$. 
	
	First, we prove $\mathbb E[\sum_{i=1}^{N_v}\sum_{j\neq i}^{N_v}\tilde g_{ij}]=o(N_v^2)$. Let $\tilde C_{ij}$ denote the number of common neighbors between vertex $i$ and $j$ in the observed network. Note that $\tilde g_{ij}=1$ if $\tilde C_{ij}\geq 1$ or $\tilde A_{i,j}=1$. Thus, it suffices to show $\mathbb E[\sum_{i=1}^{N_v}\sum_{j\neq i}^{N_v}I_{\{\tilde C_{ij}=0\}}]\sim N_v^2$ and $\mathbb E[\sum_{i=1}^{N_v}\sum_{j\neq i}^{N_v}I_{\{\tilde A_{i,j}=0\}}]\sim N_v^2$. A direct computation yields to
	\begin{align*}
	\mathbb E\Big[\sum_{i=1}^{N_v}\sum_{j\neq i}^{N_v}I_{\{\tilde A_{i,j}=0\}}\Big]&=(1-\alpha)\sum_{i=1}^{N_v}(N_v-1-d_i) +\beta\sum_{i=1}^{N_v}d_i\sim N_v^2.
	\end{align*} 
	Note that $\tilde C_{ij}=\sum_{k=1}^{N_v}\tilde A_{k,i}\tilde A_{k,j}\sim\text{Binomial}(C_{ij},(1-\beta)^2)+\text{Binomial}(d_i+d_j-2C_{ij},\alpha(1-\beta))+\text{Binomial}(N_v-d_i-d_j+C_{ij}-2,\alpha^2)$, and the three binomial random variables are independent. Then, we have
	\begin{align*}
	\mathbb E\Big[\sum_{i=1}^{N_v}\sum_{j\neq i}^{N_v}I_{\{\tilde C_{ij}=0\}}\Big]&=\sum_{i=1}^{N_v}\sum_{j\neq i}^{N_v}[1-(1-\beta)^2]^{C_{ij}}[1-\alpha(1-\beta)]^{d_i+d_j-2C_{ij}}(1-\alpha^2)^{ N_v-d_i-d_j+C_{ij}-2}.
	\end{align*} 
	Since $\sum_{i=1}^{N_v}\sum_{j\neq i}^{N_v}I_{\{  C_{ij}=0\}}\sim N_v^2$, we obtain $\mathbb E[\sum_{i=1}^{N_v}\sum_{j\neq i}^{N_v}I_{\{\tilde C_{ij}=0\}}]\sim N_v^2$.
	
	Next, we show $\mathbb E[\sum_{i=1}^{N_v}\sum_{j\neq i}^{N_v}\tilde g_{ij}I_{\{\tilde p^e_i(c_k)>0\} }I_{\{\tilde p^e_j(c_k)>0\} }\tilde p^e_{ij}(c_k)/(\tilde p^e_i(c_k)\tilde p^e_j(c_k))]=o(N_v^2)$. By Young's inequality and H\"older's inequality, we have
	\begin{align*}
	&\mathbb E\Big[\sum_{i=1}^{N_v}\sum_{j\neq i}^{N_v}\frac{\tilde g_{ij}I_{\{\tilde p^e_i(c_k)>0\} }I_{\{\tilde p^e_j(c_k)>0\} }\tilde p^e_{ij}(c_k)}{\tilde p^e_i(c_k)\tilde p^e_j(c_k)}\Big]\\
	\leq&\ \Bigg( \sum_{i=1}^{N_v}\sum_{j\neq i}^{N_v} \mathbb E(\tilde g_{ij})\Bigg)^{1/2}\Bigg( \sum_{i=1}^{N_v}\sum_{j\neq i}^{N_v}\mathbb E \Big[ \frac{I_{\{\tilde p^e_i(c_k)>0\} }I_{\{\tilde p^e_j(c_k)>0\} }\tilde p^e_{ij}(c_k)}{\tilde p^e_i(c_k)\tilde p^e_j(c_k)}\Big]^2\Bigg)^{1/2}.
	\end{align*}
	For the level $c_{00}$, we have 
	\begin{align*}
	\sum_{i=1}^{N_v}\sum_{j\neq i}^{N_v}\mathbb E  \Big[ \frac{ I_{\{\tilde p^e_i(c_{00})>0\} }I_{\{\tilde p^e_j(c_{00})>0\} }\tilde p^e_{ij}(c_{00})}{\tilde p^e_i(c_{00})\tilde p^e_j(c_{00})}\Big]^2\leq  \sum_{i=1}^{N_v}\sum_{j\neq i}^{N_v}\mathbb E  \Big[ \frac{ 1}{\tilde p^e_i(c_{00})\tilde p^e_j(c_{00})}\Big]^2 
	\end{align*}
	and $\mathbb E [(1/\tilde p^e_i(c_{00})\tilde p^e_j(c_{00}) )^2 ]=\mathcal O((1/p^e_i(c_{00})p^e_j(c_{00}) )^2)$. Thus, we obtain 
	\begin{align}\label{eqb.1}
	\mathbb E[\sum_{i=1}^{N_v}\sum_{j\neq i}^{N_v} I_{\{\tilde p^e_i(c_{00} )>0\} }I_{\{\tilde p^e_j(c_{00})>0\} }\tilde p^e_{ij}(c_{00})/(\tilde p^e_i(c_{00})\tilde p^e_j(c_{00}))]=o(N_v^2).
	\end{align}
	Similarly, we can show (\ref{eqb.1}) for other exposure levels. These complete the proof.

	\subsection{Proof of Proposition 5}
	
	The proof of Proposition 5 is the same as that of Proposition 4.
	
	\section{Proofs of theorems and corollary for noisy networks}
	
	First, we show the following lemmas that will be used in the proofs of theorems for noisy networks.
	
	\begin{lemma}\label{l1}
		Let $X_1,X_2,\cdots $ and $Y_1,Y_2,\cdots $ be two nonnegative random variable sequences. Assume $X_nI_{\{Y_n>0\}}/Y_n$ is bounded and $\mathbb E(Y_n)\neq 0$ for all $n$, $\text{Var}(X_n)=o(1)$, $\text{Var}(Y_n)=o(1)$, and $I_{\{Y_n>0\}}\xrightarrow{P}1$ as $n\rightarrow\infty$. Then, we have $ \mathbb E[X_nI_{\{Y_n>0\}}/Y_n]=  \mathbb E (X_n)/\mathbb E (Y_n)+o(1)$.
	\end{lemma}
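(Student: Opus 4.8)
The plan is to recenter the numerator and denominator at their means and bound the two resulting error terms, leaning on the uniform bound for the ratio as the main tool. Write $\mu_n=\mathbb E(X_n)$ and $\nu_n=\mathbb E(Y_n)$, let $M$ be a uniform almost-sure bound for $Z_n:=X_nI_{\{Y_n>0\}}/Y_n$, and observe first that $\mathbb P(Y_n=0)=1-\mathbb P(I_{\{Y_n>0\}}=1)\to0$ since $I_{\{Y_n>0\}}\xrightarrow{P}1$. I would then split the target difference as $\mathbb E[Z_n]-\mu_n/\nu_n=A_n+B_n$ with
\[
A_n=\mathbb E\Big[Z_n\,\frac{\nu_n-Y_n}{\nu_n}\Big],\qquad B_n=-\frac{1}{\nu_n}\,\mathbb E\big[X_nI_{\{Y_n=0\}}\big],
\]
the first identity following from $1/Y_n-1/\nu_n=(\nu_n-Y_n)/(Y_n\nu_n)$ after factoring out $Z_n$ (note $Z_nY_n=X_nI_{\{Y_n>0\}}$), and the second from $\mathbb E[X_nI_{\{Y_n>0\}}]-\mu_n=-\mathbb E[X_nI_{\{Y_n=0\}}]$. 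Here $\nu_n>0$ since $Y_n\ge0$ and $\mathbb E(Y_n)\neq0$.

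For $A_n$ the boundedness of $Z_n$ does the work: since $0\le Z_n\le M$, Cauchy--Schwarz gives $|A_n|\le (M/\nu_n)\,\mathbb E|Y_n-\nu_n|\le (M/\nu_n)\sqrt{\text{Var}(Y_n)}$, which vanishes once $\text{Var}(Y_n)=o(\nu_n^2)$. For $B_n$, Cauchy--Schwarz again yields $\mathbb E[X_nI_{\{Y_n=0\}}]\le\sqrt{\mathbb E[X_n^2]\,\mathbb P(Y_n=0)}$ with $\mathbb E[X_n^2]=\mu_n^2+\text{Var}(X_n)$, so that $|B_n|\le\sqrt{\big(\mu_n^2/\nu_n^2+\text{Var}(X_n)/\nu_n^2\big)\,\mathbb P(Y_n=0)}$. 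The decisive point is that the same bound on $Z_n$ controls the ratio $\mu_n/\nu_n$: because $X_n\ge0$, the pointwise inequality $X_nI_{\{Y_n>0\}}\le M\,Y_n$ integrates to $\mathbb E[X_nI_{\{Y_n>0\}}]\le M\nu_n$, whence $\mu_n/\nu_n\le M+\mathbb E[X_nI_{\{Y_n=0\}}]/\nu_n=O(1)$. Substituting this back shows $B_n\to0$ because $\mathbb P(Y_n=0)\to0$ while its prefactor stays bounded, and combining the two estimates gives $\mathbb E[Z_n]-\mu_n/\nu_n=o(1)$.

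Conceptually this is a bounded-convergence argument: Chebyshev turns the variance hypotheses into $X_n-\mu_n\xrightarrow{P}0$ and $Y_n-\nu_n\xrightarrow{P}0$, so that $Z_n-\mu_n/\nu_n\xrightarrow{P}0$, and the uniform bound upgrades this to convergence of the expectations. The step I expect to be the main obstacle is the division by $\nu_n$ hidden in both error terms: the hypotheses as phrased give $\text{Var}(Y_n)=o(1)$ rather than the relative statement $\text{Var}(Y_n)=o(\nu_n^2)$ that $A_n\to0$ actually requires, so one must check that $\mathbb E(Y_n)$ does not degenerate too fast (which, in the applications to Propositions 4 and 5, follows from the explicit forms of the exposure probabilities). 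The remaining delicate point is the null event $\{Y_n=0\}$, on which the indicator forces the ratio to vanish; reconciling the mismatch between $\mathbb E[X_nI_{\{Y_n>0\}}]$ and $\mu_n$ is exactly what makes the hypothesis $I_{\{Y_n>0\}}\xrightarrow{P}1$, i.e.\ $\mathbb P(Y_n=0)\to0$, indispensable.
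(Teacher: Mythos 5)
Your proof is correct modulo the non-degeneracy caveat you yourself flag, and it takes a genuinely different route from the paper's. The paper argues softly: Chebyshev turns the variance hypotheses into $X_n\xrightarrow{P}\lim_n\mathbb E(X_n)$ and $Y_n\xrightarrow{P}\lim_n\mathbb E(Y_n)$, the continuous mapping theorem combined with $I_{\{Y_n>0\}}\xrightarrow{P}1$ gives $I_{\{Y_n>0\}}/Y_n\xrightarrow{P}1/\lim_n\mathbb E(Y_n)$, hence $X_nI_{\{Y_n>0\}}/Y_n\xrightarrow{P}\lim_n\mathbb E(X_n)/\mathbb E(Y_n)$, and the boundedness hypothesis is used exactly once, to upgrade convergence in probability to convergence of expectations via bounded convergence. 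You instead prove an exact identity $\mathbb E[Z_n]-\mu_n/\nu_n=A_n+B_n$ (in your notation $Z_n=X_nI_{\{Y_n>0\}}/Y_n$, $\mu_n=\mathbb E(X_n)$, $\nu_n=\mathbb E(Y_n)$) and control each term by Cauchy--Schwarz, using the bound $M$ on $Z_n$ twice: once to get $|A_n|\le (M/\nu_n)\sqrt{\text{Var}(Y_n)}$, and once to get $\mu_n/\nu_n\le M+|B_n|$, which feeds the self-improving bound on $B_n$. Your route buys a quantitative error estimate and, importantly, does not presuppose that $\lim_n\mathbb E(X_n)$ and $\lim_n\mathbb E(Y_n)$ exist --- an assumption the paper's notation ``$X_n\xrightarrow{L_2}\lim_n\mathbb E(X_n)$'' makes silently; in exchange, the paper's argument is shorter. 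Finally, the obstacle you single out is real and is not resolved by the paper either: its continuous-mapping step applied to $y\mapsto 1/y$ tacitly needs $\lim_n\mathbb E(Y_n)>0$, and without some such non-degeneracy the lemma as literally stated is false. For instance, take $Y_n$ equal to $n^{-2}$ or $2n^{-1}$ with probability $1/2$ each and $X_n=Y_nI_{\{Y_n=2/n\}}$: all hypotheses hold (with $Z_n\le 1$), yet $\mathbb E[X_nI_{\{Y_n>0\}}/Y_n]=1/2$ while $\mathbb E(X_n)/\mathbb E(Y_n)\to1$. So your explicit remark that $\mathbb E(Y_n)$ must not degenerate too fast --- and that the applications to Propositions 4 and 5 supply this --- is a gain in precision over the paper's treatment, not a defect peculiar to your argument.
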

	\begin{proof}
		Since $\text{Var}(X_n)=o(1)$, we have $X_n\xrightarrow{L_2} \lim_{n\rightarrow\infty}\mathbb E(X_n)$. This implies $X_n\xrightarrow{P} \lim_{n\rightarrow\infty}\mathbb E (X_n)$. Similarly, by continuous mapping theorem, we have $I_{\{Y_n>0\}}/Y_n\xrightarrow{P} \lim_{n\rightarrow\infty}1/\mathbb E (Y_n)$. Thus, we obtain $X_nI_{\{Y_n>0\}}/Y_n\xrightarrow{P} \lim_{n\rightarrow\infty}\mathbb E (X_n)/\mathbb E (Y_n)$. Since $X_nI_{\{Y_n>0\}}/Y_n$ is bounded, we show $\lim_{n\rightarrow\infty}\mathbb E[X_nI_{\{Y_n>0\}}/Y_n]= \lim_{n\rightarrow\infty}\mathbb E (X_n)/\mathbb E (Y_n)$. This completes the proof.
	\end{proof}
	
	\begin{lemma}\label{l2}
		Let $T_1,T_2,\cdots $, $X_1,X_2,\cdots $, and $Y_1,Y_2,\cdots $ be three random variable sequences. Assume $X_n=a+\mathcal O_p(g(n))$, $Y_n=b+\mathcal O_p(h(n))$ as $n\rightarrow\infty$, where $g(n)=o(a)$ and $h(n)=o(b)$. Let $f_1(x,y,t),f_2(x,y,t),\cdots$ be a sequence of function. And $f_n(a,b,T_n)\xrightarrow{P}c$ as $n\rightarrow\infty$. In addition, we assume $f_n(X_n,Y_n,T_n)-f_n(a,b,T_n)=o(1)$ when $|X_n-a|\leq c_1 g(n)$ and $|Y_n-b|\leq c_2 h(n)$, where $\mathbb P(|X_n-a|> c_1g(n) )=o(1)$ and $\mathbb P(|Y_n-b|> c_2 h(n) )=o(1), c_1=\omega(1),c_2=\omega(1)$. Then, we have $f_n(X_n,Y_n,T_n)\xrightarrow{P}c$ as $n\rightarrow\infty$. 
	\end{lemma}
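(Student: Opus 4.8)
The plan is to prove the convergence in probability by decomposing $f_n(X_n,Y_n,T_n)-c$ into a ``substitution error'' and the already-controlled term $f_n(a,b,T_n)-c$, then to handle the substitution error by conditioning on a high-probability event on which the hypothesized local bound applies. First I would write
\begin{align*}
f_n(X_n,Y_n,T_n)-c = \big[f_n(X_n,Y_n,T_n)-f_n(a,b,T_n)\big] + \big[f_n(a,b,T_n)-c\big].
\end{align*}
The second bracket tends to $0$ in probability by hypothesis, so it suffices to show that the first bracket, call it $A_n$, satisfies $A_n\xrightarrow{P}0$; a sum of two terms each converging to $0$ in probability converges to $0$ in probability, which then gives $f_n(X_n,Y_n,T_n)\xrightarrow{P}c$.

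Next I would introduce the ``good'' event
\begin{align*}
E_n = \big\{|X_n-a|\leq c_1 g(n)\big\} \cap \big\{|Y_n-b|\leq c_2 h(n)\big\}.
\end{align*}
By the union bound together with $\mathbb P(|X_n-a|>c_1 g(n))=o(1)$ and $\mathbb P(|Y_n-b|>c_2 h(n))=o(1)$ (these are exactly what $X_n=a+\mathcal O_p(g(n))$ and $Y_n=b+\mathcal O_p(h(n))$ deliver once $c_1,c_2=\omega(1)$), we have $\mathbb P(E_n^c)=o(1)$. The hypothesis then supplies a deterministic null sequence $\delta_n\to 0$ with $|A_n|\leq\delta_n$ on $E_n$. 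Hence for any $\varepsilon>0$,
\begin{align*}
\mathbb P(|A_n|>\varepsilon) \leq \mathbb P(E_n^c) + \mathbb P\big(\{|A_n|>\varepsilon\}\cap E_n\big) \leq \mathbb P(E_n^c) + \mathbbm{1}\{\delta_n>\varepsilon\},
\end{align*}
and both terms vanish as $n\to\infty$, so $A_n\xrightarrow{P}0$ and the conclusion follows.

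The main obstacle is one of interpretation rather than computation: the hypothesis ``$f_n(X_n,Y_n,T_n)-f_n(a,b,T_n)=o(1)$ when $|X_n-a|\leq c_1 g(n)$ and $|Y_n-b|\leq c_2 h(n)$'' must be read as asserting the existence of a deterministic $\delta_n\to 0$ that bounds the increment of $f_n$ over the shrinking box $[a-c_1 g(n),\,a+c_1 g(n)]\times[b-c_2 h(n),\,b+c_2 h(n)]$ \emph{uniformly in the third argument} $t$. This uniformity in $t$ is exactly what lets the bound survive the randomness of $T_n$ once we condition on $E_n$; without it, one could not pass from a pointwise local-smoothness statement to the probabilistic conclusion, since $f_n(a,b,T_n)$ is itself only known to converge in probability. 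Once this reading is fixed, the remaining event-splitting and union-bound estimates are routine.
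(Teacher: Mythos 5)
Your proposal is correct and follows essentially the same route as the paper's proof: decompose $f_n(X_n,Y_n,T_n)-c$ into the substitution error plus $f_n(a,b,T_n)-c$, bound the probability of each piece separately, and kill the substitution error using the event $\{|X_n-a|\leq c_1 g(n)\}\cap\{|Y_n-b|\leq c_2 h(n)\}$ whose complement has vanishing probability. Your write-up is in fact slightly more careful than the paper's, which simply asserts $\mathbb P(|f_n(X_n,Y_n,T_n)-f_n(a,b,T_n)|>\varepsilon_n/2)\to 0$, whereas you make explicit the good-event conditioning, the deterministic null sequence $\delta_n$, and the needed uniformity of the hypothesis in the third argument $t$.
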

	\begin{proof}
		By triangle inequality, we have 
		\begin{align*}
		|f_n(X_n,Y_n,T_n)-c|\leq |f_n(X_n,Y_n,T_n)-f(a,b,T_n)|+| f_n(a,b,T_n)-c|.
		\end{align*}	
		Then, for any $\varepsilon_n>0$, we have
		\begin{align*}
		&\ \mathbb P(|f_n(X_n,Y_n,T_n)-c|>\varepsilon_n)\\
		\leq&\ \mathbb P(|f_n(X_n,Y_n,T_n)-f_n(a,b,T_n)|+| f_n(a,b,T_n)-c|>\varepsilon_n)\\
		\leq&\ \mathbb P(|f_n(X_n,Y_n,T_n)-f_n(a,b,T_n)|>\varepsilon_n/2 )+\mathbb P(| f_n(a,b,T_n)-c|>\varepsilon_n/2 )
		\end{align*}
		where the last step follows by the pigeonhole principle and the sub-additivity of the probability measure. Note that $\lim_{n\rightarrow\infty} \mathbb P(|f_n(X_n,Y_n,T_n)-f_n(a,b,T_n)|>\varepsilon_n/2 ) =0$ and $\lim_{n\rightarrow\infty} \mathbb P(| f_n(a,b,T_n)-c|>\varepsilon_n/2 ) =0$. Thus, we obtain 
		\begin{align*}
		\lim_{n\rightarrow\infty}\mathbb P(|f(_nX_n,Y_n,T_n)-c|>\varepsilon)=0.
		\end{align*}
	\end{proof}

	\begin{lemma}\label{l4}
		Let $T_1,T_2,\cdots $, $X_1,X_2,\cdots $, and $Y_1,Y_2,\cdots $ be three random variable sequences. Assume $X_n=a+\mathcal O_p(g(n))$, $Y_n=b+\mathcal O_p(h(n))$ as $n\rightarrow\infty$, where $g(n)=o(a)$ and $h(n)=o(b)$. Let $f_1(x,y,t),f_2(x,y,t),\cdots$ be a sequence of bounded function. And $f_n(X_n,Y_n,T_n)-f_n(a,b,T_n)=o(1)$ when $|X_n-a|\leq c_1 g(n)$ and $|Y_n-b|\leq c_2 h(n)$, where $\mathbb P(|X_n-a|> c_1g(n) )=o(1)$ and $\mathbb P(|Y_n-b|> c_2 h(n) )=o(1), c_1=\omega(1),c_2=\omega(1)$. Then, we have $\mathbb E f_n(X_n,Y_n,T_n)-\mathbb E f_n(a,b,T_n)\rightarrow0$ as $n\rightarrow\infty$. 
	\end{lemma}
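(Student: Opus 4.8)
The plan is to split the expectation of the difference over a ``good'' event on which both stochastic approximations are accurate and over its complement, exploiting the uniform boundedness of the $f_n$ to kill the contribution from the complement. This is the expectation-level analogue of Lemma \ref{l2}: the same event decomposition drives the argument, but the extra boundedness hypothesis is what upgrades the conclusion from convergence in probability to convergence of expectations.

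First I would set $D_n = f_n(X_n,Y_n,T_n) - f_n(a,b,T_n)$ and define the good event $A_n = \{|X_n-a|\leq c_1 g(n)\}\cap\{|Y_n-b|\leq c_2 h(n)\}$. By sub-additivity of the probability measure together with the tail hypotheses $\mathbb P(|X_n-a|>c_1 g(n))=o(1)$ and $\mathbb P(|Y_n-b|>c_2 h(n))=o(1)$, we obtain $\mathbb P(A_n^c)=o(1)$. I would then write $|\mathbb E[D_n]|\leq \mathbb E[|D_n| I_{A_n}] + \mathbb E[|D_n| I_{A_n^c}]$ and bound each term separately.

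On $A_n$ the displayed hypothesis gives $f_n(X_n,Y_n,T_n)-f_n(a,b,T_n)=o(1)$ deterministically, so the first expectation is $o(1)$. For the second, since the $f_n$ are uniformly bounded by some constant $M$, we have $|D_n|\leq 2M$ everywhere, whence $\mathbb E[|D_n| I_{A_n^c}]\leq 2M\,\mathbb P(A_n^c)=o(1)$. Combining the two bounds yields $\mathbb E[D_n]\to 0$, which is the claim.

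The main subtlety is precisely the boundedness requirement. Convergence in probability of $D_n$---which Lemma \ref{l2} already delivers under these same hypotheses---does not by itself force the expectations to converge, because a vanishingly-probable event could still carry non-negligible mass. Uniform boundedness supplies exactly the uniform integrability needed, so the argument is in effect a bounded convergence theorem applied to $D_n$. I would be careful to read ``sequence of bounded functions'' as \emph{uniformly} bounded (a single constant $M$ working for every $n$), consistent with the boundedness assumption in Lemma \ref{l1}; without uniformity one would instead need a Markov-type tail condition on $f_n(X_n,Y_n,T_n)$ to control the bad-event contribution.
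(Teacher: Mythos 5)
Your proposal is correct and follows essentially the same argument as the paper: the identical decomposition of $\mathbb{E}[f_n(X_n,Y_n,T_n)-f_n(a,b,T_n)]$ over the good event $\{|X_n-a|\leq c_1 g(n)\}\cap\{|Y_n-b|\leq c_2 h(n)\}$ and its complement, with boundedness of $f_n$ killing the complement term and the hypothesis handling the good event. Your write-up is in fact more careful than the paper's (which compresses the final step into ``Since $f_n$ is bounded''), and your observation that boundedness should be read as \emph{uniform} boundedness is a legitimate clarification of what the paper implicitly assumes.
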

	
	\begin{proof}
		By the definition of the expectation, we have
		\begin{align*}
		&\mathbb E f_n(X_n,Y_n,T_n)-\mathbb E f_n(a,b,T_n) \\
		= \ &\mathbb E\Big[ \big(f_n(X_n,Y_n,T_n) -f_n(a,b,T_n)  ) \cdot I_{\{|X_n-a|>c_1g(n) \text{ or } |Y_n-b|> c_2 h(n) \}  } \Big] \\ 
		+\ &  \mathbb E\Big[ \big(f_n(X_n,Y_n,T_n) -f_n(a,b,T_n)  ) \cdot I_{\{|X_n-a|\leq c_1 g(n)  \text{ and } |Y_n-b|\leq  c_2 h(n)  \}  } \Big] 
		\end{align*}
		Since $f_n$ is bounded, we have 
		\begin{align*}
		\mathbb E f_n(X_n,Y_n,T_n)-\mathbb E f_n(a,b,T_n)\rightarrow 0 
		\end{align*}
		as $n\rightarrow\infty$. 
	\end{proof}

	\begin{lemma}\label{ll1}
		Let $T_1,T_2,\cdots $, $X_1,X_2,\cdots $ and $Y_1,Y_2,\cdots $ be three random variable sequences, and $X_n$ and $Y_n$ depend  on $T_n$ for all $n$. Assume $X_n$ is bounded, $\mathbb E(Y_n)\neq 0$ for all $n$, $\mathbb E(Y_n)= \Theta(1)$,  $\text{Var}(Y_n)=o(1)$ as $n\rightarrow\infty$. Then, we have 
		\begin{align*}
		\mathbb E\left(\frac{X_n}{Y_n} I_{ \{Y_n=\Theta(1)\cap Y_n\neq 0 \} }\right) = \frac{\mathbb EX_n}{\mathbb EY_n} + \mathcal O\left(\max\left\{ (\text{Var} Y_n)^{1/3},1- \mathbb P(Y_n=\Theta(1)\cap Y_n\neq 0)\right\} \right).
		\end{align*}
	\end{lemma}
	
	\begin{proof}
		By Jensen's inequality, we have
		\begin{align} \label{eq:c1}
		&\ \left|\mathbb E\left(\frac{X_n}{Y_n} I_{ \{Y_n=\Theta(1)\cap Y_n\neq 0 \} }\right)  - \frac{\mathbb EX_n}{\mathbb EY_n}   \right|\\
		=&\ \mathcal O\Bigg(\max\bigg\{  \mathbb E  \left(\frac{X_n |\mathbb EY_n-Y_n| }{Y_n} \cdot I_{ \{Y_n=\Theta(1) \cap Y_n\neq 0  \} } \right) , 1- \mathbb P(Y_n=\Theta(1)\cap Y_n\neq 0)\bigg\}\Bigg).
		\end{align}
		Then by additivity of expectation, for $0<\delta_n<1$, $\mathbb E  \left(\frac{X_n |\mathbb EY_n-Y_n| }{Y_n}\cdot I_{ \{Y_n=\Theta(1)\cap Y_n\neq 0  \} }\right)$ in (\ref{eq:c1}) equals
		\begin{align}\label{eq:c2}
		\begin{split}
		&\ \mathbb E  \left(\frac{X_n |\mathbb EY_n-Y_n| }{Y_n} \cdot I_{ \{Y_n=\Theta(1)\cap Y_n\neq 0  \} }\cdot I_{|Y_n-\mathbb EY_n|\geq \delta_n \mathbb EY_n }  \right)  \\
		+ &\ \mathbb E  \left(\frac{X_n |\mathbb EY_n-Y_n| }{Y_n} \cdot I_{ \{Y_n=\Theta(1)\cap Y_n\neq 0  \} }\cdot I_{|Y_n-\mathbb EY_n|< \delta_n \mathbb EY_n }  \right)    .
		\end{split}
		\end{align}
		Next, we find the upper bounds of two terms in (\ref{eq:c2}). For the first term, by Chebyshev's inequality, we obtain
		\begin{align}
		\mathbb E  \left(\frac{X_n |\mathbb EY_n-Y_n| }{Y_n} \cdot I_{ \{Y_n=\Theta(1)\cap Y_n\neq 0  \} }\cdot I_{|Y_n-\mathbb EY_n|\geq \delta_n \mathbb EY_n }  \right) =\mathcal O\left( \frac{\text{Var} Y_n }{\delta_n^2   } \right).
		\end{align}
		For the second term, when $|Y_n-\mathbb E Y_n|<\delta_n\mathbb E Y_n$, $Y_n> (1-\delta_n)\mathbb E Y_n$. So, we obtain
		\begin{align*}
		\mathbb E  \left(\frac{X_n |\mathbb EY_n-Y_n| }{Y_n} \cdot I_{ \{Y_n=\Theta(1)\cap Y_n\neq 0  \} }\cdot I_{|Y_n-\mathbb EY_n|< \delta_n \mathbb EY_n }  \right) =\mathcal O\left( \frac{\delta_n}{(1-\delta_n) }\right).
		\end{align*}
		Letting $\delta_n=(\text{Var} Y_n)^{1/3}$ completes the proof.
	\end{proof}

	\begin{lemma}\label{l5}
		Define $\hat \theta_n=\theta_n(a,b,T_n)$, $\tilde \theta_n=\theta_n(X_n,Y_n,T_n)$, where $X_n$, $Y_n$, $T_n$ are random variables, and $a,b$ are constants. Assume $\frac{\hat \theta_n - \tilde \theta_n}{\sqrt{\text{Var}(\tilde \theta_n) }}\xrightarrow{P} 0$ as $n\rightarrow\infty$. Then, we have 
		\begin{itemize}
			\item [(i)] $\text{Var}(\hat \theta_n) \sim \text{Var}(\tilde \theta_n)$ as $n\rightarrow\infty$.
			\item [(ii)] $\frac{\hat \theta_n - \mathbb E(\hat \theta_n )}{\sqrt{\text{Var}(\hat \theta_n) }}$ and $\frac{\tilde \theta_n - \mathbb E(\tilde  \theta_n )}{\sqrt{\text{Var}(\tilde \theta_n) }}$ have the same asymptotic distribution.
		\end{itemize}
	\end{lemma}
	\begin{proof}
		By Slutsky's theorem, we obtain that $\frac{\hat \theta_n - \mathbb E(\hat \theta_n )}{\sqrt{\text{Var}(\tilde \theta_n) }}$ and $\frac{\tilde \theta_n - \mathbb E(\hat  \theta_n )}{\sqrt{\text{Var}(\tilde \theta_n) }}$ have the same asymptotic distribution. 
		Next, we show statements (i) and (ii) in this lemma. Let $W$ denote the limiting distribution of $\frac{\hat \theta_n - \mathbb E(\hat \theta_n )}{\sqrt{\text{Var}(\hat \theta_n) }}$. Note that $\mathbb E(W)=0$ and $\text{Var}(W)=1$.
		Assume that $\sqrt{\text{Var}(\hat \theta_n)}/\sqrt{\text{Var}(\tilde \theta_n)}\rightarrow c_1$ and $ \big(\mathbb E(\hat\theta_n)-\mathbb E(\tilde\theta_n) \big)/\sqrt{\text{Var}(\tilde \theta_n)}\rightarrow c_2$.
		By Slutsky's theorem, we have $ \frac{\tilde \theta_n - \mathbb E(\tilde  \theta_n )}{\sqrt{\text{Var}(\tilde \theta_n) }}  \xrightarrow{d}  c_1 W+ c_2$.
		Since $\mathbb E\Big(\frac{\tilde \theta_n - \mathbb E(\tilde  \theta_n )}{\sqrt{\text{Var}(\tilde \theta_n) }}\Big)=0$ and $\text{Var}\Big(\frac{\tilde \theta_n - \mathbb E(\tilde  \theta_n )}{\sqrt{\text{Var}(\tilde \theta_n) }}\Big)=1$, we obtain $c_1=1$ and $c_2=0$.
		Thus, we have $\text{Var}(\hat \theta_n) \sim \text{Var}(\tilde \theta_n)$ and $ \frac{\tilde \theta_n - \mathbb E(\tilde  \theta_n )}{\sqrt{\text{Var}(\tilde \theta_n) }}  \xrightarrow{d}  W$.
	\end{proof}

	Mimicking the proofs of Lemma 1, Proposition 1, Theorem 2 and Theorem 3 in \cite{chang2020estimation}, we can show the following lemma.
	\begin{lemma}\label{ll2}
		Under the assumptions 1-5, it holds that  $\hat\alpha=\alpha+\mathcal O_p(N_v^{-3/2}\bar d^{-1/2})$, $\hat\beta=\beta+\mathcal O_p(N_v^{-1/2}\bar d^{-1/2})$  and $(N_v^{3/2}\bar d^{1/2}\hat\alpha-N_v^{3/2}\bar d^{1/2} \alpha,N_v^{ 1/2}\bar d^{1/2}\hat\beta- N_v^{ 1/2}\bar d^{1/2} \beta)$ has asymptotic normal distribution with mean $(0,0)$. 
	\end{lemma}
	
	We summarize tails bounds for Normal distribution and Poisson distribution in the following lemma. 
	\begin{lemma}\label{ll3}
		Let X be a random variable. Define $\mu=\mathbb EX$.
		\begin{itemize}
			\item [(i)] If $X\sim \text{Pois}(\lambda)$, $\mathbb P(X\geq x)\leq \dfrac{(e\lambda)^xe^{-\lambda}}{x^x}$, for $x>\lambda$, and $\mathbb P(X\leq x)\leq \dfrac{(e\lambda)^xe^{-\lambda}}{x^x}$, for $x<\lambda$.
			\item [(ii)] If $X\sim \text{Normal}(\mu,\sigma^2)$, $\mathbb P(|X-\mu|\geq \sigma x)\leq 2\ \dfrac{e^{-x^2/2} }{x\sqrt{2\pi} }$, for $x>0$.
		\end{itemize}
	\end{lemma}
	
	\subsection{Proof of Theorem 1}
	
	In the noisy network, Aronow and Samii estimators for $\overline{y}(c_k)$ become 
	\begin{align*}
	\tilde{\overline{y}}_{A\&S}(c_{11})=&\ \frac{1}{N_v} \sum_{i=1}^{N_v}I_{ \{\tilde d_i>0 \} }  I_{\{f({\bm Z},\tilde {\bm A}_{\cdot i})=c_{11}\}} \frac{1}{p\big[1-(1-p)^{\tilde d_i}\big] } \Big[I_{\{f({\bm Z}, {\bm A}_{\cdot i})=c_{11}\}}y_i(c_{11}) \\ &\ +I_{\{f({\bm Z}, {\bm A}_{\cdot i})=c_{10}\}} y_i(c_{10})\Big] ,\\
	\tilde{\overline{y}}_{A\&S}(c_{10})=&\ \frac{1}{N_v} \sum_{i=1}^{N_v}I_{\{f({\bm Z},\tilde {\bm A}_{\cdot i})=c_{10}\}}\frac{1}{p(1-p)^{\tilde d_i}} \Big[I_{\{f({\bm Z}, {\bm A}_{\cdot i})=c_{11}\}} y_i(c_{11})+I_{\{f({\bm Z}, {\bm A}_{\cdot i})=c_{10}\}} y_i(c_{10}) \Big] ,\\
	\tilde{\overline{y}}_{A\&S}(c_{01})=&\ \frac{1}{N_v} \sum_{i=1}^{N_v}I_{ \{\tilde d_i>0 \} }  I_{\{f({\bm Z},\tilde {\bm A}_{\cdot i})=c_{01}\}} \frac{1}{(1-p)\big[1-(1-p)^{\tilde d_i}\big] } \Big[I_{\{f({\bm Z}, {\bm A}_{\cdot i})=c_{01}\}}y_i(c_{01}) \\ &\ +I_{\{f({\bm Z}, {\bm A}_{\cdot i})=c_{00}\}} y_i(c_{00})\Big] ,\\
	\tilde{\overline{y}}_{A\&S}(c_{00})=&\ \frac{1}{N_v} \sum_{i=1}^{N_v}I_{\{f({\bm Z},\tilde {\bm A}_{\cdot i})=c_{00}\}}\frac{1}{(1-p)^{\tilde d_i+1}} \Big[I_{\{f({\bm Z}, {\bm A}_{\cdot i})=c_{01}\}} y_i(c_{01})+I_{\{f({\bm Z}, {\bm A}_{\cdot i})=c_{00}\}} y_i(c_{00}) \Big].
	\end{align*}
	The conditional biases for these estimators are
	\begin{align*}
	\text{Bias}\Big[ \tilde{\overline{y}}_{A\&S}(c_{11})|\tilde {\bm A}\Big] = & \ -\frac{1}{N_v}\sum_{i=1}^{N_v}I_{ \{\tilde d_i>0 \} } \frac{(1-p)^{d_i}\big[1-(1-p)^{\tilde d_i-\check d_i}\big]}{1-(1-p)^{\tilde d_i}}\  \uptau_i(c_{11},c_{10})\\
	& \  -\frac{1}{N_v}\sum_{i=1}^{N_v}I_{ \{\tilde d_i=0 \} }\ y_i(c_{11}),\\
	\text{Bias}\Big[ \tilde{\overline{y}}_{A\&S}(c_{10})|\tilde {\bm A}\Big]= &\ \frac{1}{N_v}\sum_{i=1}^{N_v}\big[1-(1-p)^{d_i-\check d_i}\big] \uptau_i(c_{11},c_{10}),\\
	\text{Bias}\Big[ \tilde{\overline{y}}_{A\&S}(c_{01})|\tilde {\bm A}\Big]= & \ -\frac{1}{N_v}\sum_{i=1}^{N_v}I_{ \{\tilde d_i>0 \} } \frac{(1-p)^{d_i}\big[1-(1-p)^{\tilde d_i-\check d_i}\big]}{1-(1-p)^{\tilde d_i}}\ \uptau_i(c_{01},c_{00})\\
	& \  -\frac{1}{N_v}\sum_{i=1}^{N_v}I_{ \{\tilde d_i=0 \} }\ y_i(c_{01}),\\
	\text{Bias}\Big[ \tilde{\overline{y}}_{A\&S}(c_{00})|\tilde {\bm A}\Big]=  &\ \frac{1}{N_v}\sum_{i=1}^{N_v}\big[1-(1-p)^{d_i-\check d_i}\big] \uptau_i(c_{01},c_{00}). 
	\end{align*}
	
	Based on the noisy network model, we have
	\begin{align*}
	\mathbb{E}[(1-p)^{d_i-\check d_i}]&=(1-\beta p)^{d_i}, \\
	\text{Var}[(1-p)^{  d_i-\check d_i}]&=\big[1-\beta p(2-p)\big]^{ d_i}-(1-\beta p)^{2 d_i},\\
	\mathbb{E}[(1-p)^{\tilde d_i-\check d_i}]&=(1-\alpha p)^{N_v-1-d_i},\\
	\text{Var}[(1-p)^{\tilde d_i-\check d_i}]&=\big[1-\alpha p(2-p)\big]^{N_v-1-d_i}-(1-\alpha p)^{2(N_v-1-d_i)},\\
	\mathbb{E}[(1-p)^{\tilde d_i}]=&\ (1-\alpha p)^{N_v-1-d_i}\big[1-(1-\beta)p\big]^{d_i},\\
	\text{Var}[(1-p)^{\tilde d_i}]=&\ \big[1-\alpha p(2-p)\big]^{N_v-1-d_i}\big[1-(1-\beta)p(2-p)\big]^{d_i},\\
	&-(1-\alpha p)^{2(N_v-1-d_i)}\big[1-(1-\beta)p\big]^{2d_i}. 
	\end{align*}
	For exposure levels $c_{10}$ and $c_{00}$, by taking the expectation with respect to $\tilde {\bm A}$, we obtain 
	\begin{align*}
	\text{Bias}\Big[ \tilde{\overline{y}}_{A\&S}(c_{10})\Big]=  &\ \frac{1}{N_v} \sum_{i=1}^{N_v}\big[1-(1-\beta p)^{d_i}\big] \ \uptau_i(c_{11},c_{10}),\\
	\text{Bias}\Big[ \tilde{\overline{y}}_{A\&S}(c_{00})\Big]=  &\ \frac{1}{N_v} \sum_{i=1}^{N_v}\big[1-(1-\beta p)^{d_i}\big] \ \uptau_i(c_{01},c_{00}).
	\end{align*}
	Recall that $p=o(1)$. Then, we have  $\text{Var}[(1-p)^{d_i-\check d_i}]=o(1)$, $\text{Var}[(1-p)^{\tilde d_i}]=o(1)$  and $I_{\{\tilde d_i>0\}}\xrightarrow{P}1$. For exposure levels $c_{11}$ and $c_{01}$, by Lemma \ref{l1}, we have
	\begin{align*}
	\text{Bias}\Big[ \tilde{\overline{y}}_{A\&S}(c_{11})\Big] =& \ -  \frac{1}{N_v} \sum_{i=1}^{N_v} \frac{(1-p)^{d_i}\big[1-(1-\alpha p)^{N_v-1-d_i}\big]}{1-(1-\alpha p)^{N_v-1-d_i}(1- (1-\beta)p  )^{d_i}}\ \uptau_i(c_{11},c_{10})+  o(1),\\
	\text{Bias}\Big[ \tilde{\overline{y}}_{A\&S}(c_{01})\Big] =& \ -  \frac{1}{N_v} \sum_{i=1}^{N_v} \frac{(1-p)^{d_i}\big[1-(1-\alpha p)^{N_v-1-d_i}\big]}{1-(1-\alpha p)^{N_v-1-d_i}(1- (1-\beta)p  )^{d_i}}\ \uptau_i(c_{01},c_{00}) +  o(1).
	\end{align*}

	\subsection{Proof of Theorem 2}
	
	By the law of total variance, we have $\text{Var}[ \tilde{\overline{y}}_{A\&S}(c_{k})]=\text{Var}[\mathbb E( \tilde{\overline{y}}_{A\&S}(c_{k})|\tilde {\bm A} )]+\mathbb E[\text{Var}( \tilde{\overline{y}}_{A\&S}(c_{k})|\tilde {\bm A} )]$. Thus, it suffices to show $\text{Var}[\mathbb E( \tilde{\overline{y}}_{A\&S}(c_{k})|\tilde {\bm A} )]=o(1)$ and  $\mathbb E[\text{Var}( \tilde{\overline{y}}_{A\&S}(c_{k})|\tilde {\bm A} )]=o(1)$.
	
	(i) $\text{Var}[\mathbb E( \tilde{\overline{y}}_{A\&S}(c_{k})|\tilde {\bm A} )]=o(1)$.
	
	Note that $\text{Var}[\mathbb E( \tilde{\overline{y}}_{A\&S}(c_{k})|\tilde {\bm A} )]=\text{Var}[\text{Bias}( \tilde{\overline{y}}_{A\&S}(c_{k})|\tilde {\bm A} )]$. For the exposure level $c_{00}$, we have
	\begin{align*}
	\text{Var}[\mathbb E( \tilde{\overline{y}}_{A\&S}(c_{00})|\tilde {\bm A} )]
	=  &\ \frac{1}{N_v^2}\sum_{i=1}^{N_v}\text{Var}\big[1-(1-p)^{d_i-\check d_i}\big] \uptau_i^2(c_{01},c_{00}) \\
	&+ \frac{1}{N_v^2}\sum_{i=1}^{N_v}\sum_{j\neq i} \text{Cov}\big((1-p)^{d_i-\check d_i},(1-p)^{d_j-\check d_j}\big )\uptau_i (c_{01},c_{00})\uptau_j(c_{01},c_{00}).
	\end{align*}
	Note that $\text{Var}[1-(1-p)^{d_i-\check d_i}]\leq 1$ and $\text{Cov}((1-p)^{d_i-\check d_i},(1-p)^{d_j-\check d_j} )=0$ if $A_{i,j}=0$. Thus, we have $\text{Var}[\mathbb E( \tilde{\overline{y}}_{A\&S}(c_{00})|\tilde {\bm A} )]=o(1)$. Similarly, we can show $\text{Var}[\mathbb E( \tilde{\overline{y}}_{A\&S}(c_{k})|\tilde {\bm A} )]=o(1)$ for other exposure levels.
	
	(ii) $\mathbb E[\text{Var}( \tilde{\overline{y}}_{A\&S}(c_{k})|\tilde {\bm A} )]=o(1)$.
	
	For the exposure level $c_{00}$, we have
	\begin{align*}\label{eqc.1}
	&\ \text{Var} \Big[ \tilde{\overline{y}}_{A\&S}(c_{00})\Big|\tilde {\bm A} \Big] \\
	=&\ \frac{1}{N_v^2}
	\Bigg\{ \sum_{i=1}^{N_v}\Big\{ \sum_{k'\in\{00,01\} } \check {p}_i^e(c_{00},c_{k'}) \Big[1-\check {p}_i^e(c_{00} ,c_{k'})\Big]\Big[\frac{y_i(c_{k'})}{\tilde p_i^e(c_{00} )} \Big]^2- 2\check {p}_i^e(c_{00} ,c_{00})\check {p}_i^e(c_{00} ,c_{01})\frac{y_i(c_{00})}{\tilde p_i^e(c_{00} )}\frac{y_i(c_{01})}{\tilde p_i^e(c_{00} )}\Big\} \\
	&\ +\sum_{i=1}^{N_v} \sum_{j\neq i }\sum_{k'\in\{00,01\}  } \sum_{k''\in\{00,01\}  } \big[\check {p}_{ij}^e(c_{00} ,c_{k'},c_{00},c_{k''})-\check {p}_{i}^e(c_{00},c_{k'})\check {p}_{j}^e(c_{00},c_{k''})\big]\frac{y_{i}(c_{k'})}{\tilde p_{i}^e(c_{00})}\frac{y_{j}(c_{k''})}{\tilde p_{j}^e(c_{00})}\Bigg \},\numberthis
	\end{align*}
	where
	\begin{align*} 
	\check {p}_{ij}^e(c_{k_1},c_{k_2} )&=\sum_{\bm z}p_{\bm z}I_{\{f({\bm z}, \tilde {\bm A}_{\cdot i})=c_{k_1}\}}I_{\{f({\bm z}, {\bm A}_{\cdot i})=c_{k_2}\}},\\
	\check {p}_{ij}^e(c_{k_1},c_{k_2},c_{k_3},c_{k_4})&=\sum_{\bm z}p_{\bm z} I_{\{f({\bm z},\tilde {\bm X}_{i})=c_{k_1}\}} I_{\{f({\bm z},{\bm x}_{i})=c_{k_2}\}} I_{\{f({\bm z},\tilde {\bm X}_{j})=c_{k_3}\}} I_{\{f({\bm z},{\bm x}_{j})=c_{k_4}\}}  
	\end{align*}
	for $k_1,\ k_2,\ k_3,\ k_4=1,2,\cdots, K$, $i,j=1,2,\cdots, N_v$.
	
	Note that $\sum_{k'\in\{00,01\} } \check {p}_i^e(c_{00},c_{k'})=\tilde p_i^e(c_{00})$ and $\sum_{k'\in\{00,01\}  } \sum_{k''\in\{00,01\}  }  \check {p}_{ij}^e(c_{00} ,c_{k'},c_{00},c_{k''})=\tilde p_{ij}^e(c_{00})$. Then, (\ref{eqc.1}) leads to 
	\begin{align*}
	\text{Var} \Big[ \tilde{\overline{y}}_{A\&S}(c_{00})\Big|\tilde {\bm A} \Big]\leq \frac{C_1}{N_v^2}    \sum_{i=1}^{N_v} \frac{1}{\tilde p_i^e(c_{00})}+\frac{C_2}{N_v^2} \sum_{i=1}^{N_v} \sum_{j\neq i }\frac{\tilde g_{ij}\tilde p_{ij}^e(c_{00}) }{\tilde p_i^e(c_{00})\tilde p_j^e(c_{00})},
	\end{align*}
	where $C_1$ and $C_2$ are positive constants. By Proposition 4 and (\ref{eqb.1}), we obtain $\mathbb E[\text{Var}( \tilde{\overline{y}}_{A\&S}(c_{00})|\tilde {\bm A} )]=o(1)$. Similarly, we can show $\mathbb E[\text{Var}( \tilde{\overline{y}}_{A\&S}(c_{k})|\tilde {\bm A} )]=o(1)$ for other exposure levels.
	
	\subsection{Proof of Theorem 3}
	
	The proof of Theorem 3 is same as that of Theorem 2.
	
	\subsection{Proof of Theorem 4}
	We first provide explicit expression for $\tilde y_{\text{MME},i}(c_{k})$.
	By the definition of $\tilde{\bm y}_i$ in (4.1), we have
	\begin{align*}
	\tilde y_i (c_{11})&=I_{\{f({\bm Z},\tilde {\bm A}_{\cdot i})=c_{11}\}}\Big[I_{\{f({\bm Z}, {\bm A}_{\cdot i})=c_{11}\}} y_i(c_{11})+I_{\{f({\bm Z}, {\bm A}_{\cdot i})=c_{10}\}}y_i(c_{10}) \Big] ,\\
	\tilde y_i (c_{10})&=I_{\{f({\bm Z},\tilde {\bm A}_{\cdot i})=c_{10}\}}\Big[I_{\{f({\bm Z}, {\bm A}_{\cdot i})=c_{11}\}} y_i(c_{11})+I_{\{f({\bm Z}, {\bm A}_{\cdot i})=c_{10}\}}y_i(c_{10}) \Big] ,\\
	\tilde y_i (c_{01})&=I_{\{f({\bm Z},\tilde {\bm A}_{\cdot i})=c_{01}\}}\Big[I_{\{f({\bm Z}, {\bm A}_{\cdot i})=c_{01}\}} y_i(c_{01})+I_{\{f({\bm Z}, {\bm A}_{\cdot i})=c_{00}\}}y_i(c_{00}) \Big], \\
	\tilde y_i (c_{00})&=I_{\{f({\bm Z},\tilde {\bm A}_{\cdot i})=c_{00}\}}\Big[I_{\{f({\bm Z}, {\bm A}_{\cdot i})=c_{01}\}} y_i(c_{01})+I_{\{f({\bm Z}, {\bm A}_{\cdot i})=c_{00}\}}y_i(c_{00}) \Big].
	\end{align*}
	By the definition of $\bm P(d_i,\alpha,\beta)$ in (4.2), we have $\bm P(d_i,\alpha,\beta) =\text{diag}(\bm S(d_i,\alpha,\beta),\bm Q(d_i,\alpha,\beta))$, where
	\begin{align*}
	\bm S(d_i,\alpha,\beta) &=\begin{bmatrix}
	\text{E}[ I_{\{f({\bm Z},\tilde {\bm A}_{\cdot i})=c_{11}\}}I_{\{f({\bm Z}, {\bm A}_{\cdot i})=c_{11}\}}]& \text{E}[ I_{\{f({\bm Z},\tilde {\bm A}_{\cdot i})=c_{11}\}}I_{\{f({\bm Z}, {\bm A}_{\cdot i})=c_{10}\}}]\\
	\text{E}[ I_{\{f({\bm Z},\tilde {\bm A}_{\cdot i})=c_{10}\}}I_{\{f({\bm Z}, {\bm A}_{\cdot i})=c_{11}\}}]& \text{E}[ I_{\{f({\bm Z},\tilde {\bm A}_{\cdot i})=c_{10}\}}I_{\{f({\bm Z}, {\bm A}_{\cdot i})=c_{10}\}}]  \\
	\end{bmatrix},\\
	\bm Q(d_i,\alpha,\beta) &=\begin{bmatrix}
	\text{E}[ I_{\{f({\bm Z},\tilde {\bm A}_{\cdot i})=c_{01}\}}I_{\{f({\bm Z}, {\bm A}_{\cdot i})=c_{01}\}}]& \text{E}[ I_{\{f({\bm Z},\tilde {\bm A}_{\cdot i})=c_{01}\}}I_{\{f({\bm Z}, {\bm A}_{\cdot i})=c_{00}\}}]\\
	\text{E}[ I_{\{f({\bm Z},\tilde {\bm A}_{\cdot i})=c_{00}\}}I_{\{f({\bm Z}, {\bm A}_{\cdot i})=c_{01}\}}]& \text{E}[ I_{\{f({\bm Z},\tilde {\bm A}_{\cdot i})=c_{00}\}}I_{\{f({\bm Z}, {\bm A}_{\cdot i})=c_{00}\}}]  \\
	\end{bmatrix},
	\end{align*}
	and
	\begin{align*}
	S_{11} (d_i,\alpha,\beta) =&\ p\Big\{ 1-(1-p)^{d_i} -(1-\alpha p)^{N_v-1-d_i} \cdot \big[  (1- (1-\beta)p )^{d_i}-(1-p)^{d_i} \big]  \Big \}, \\
	S_{12} (d_i,\alpha,\beta)=&\ p(1-p)^{d_i} \big[1-(1-\alpha p)^{N_v-1-d_i} \big], \\
	S_{21} (d_i,\alpha,\beta)=&\ p (1-\alpha p)^{N_v-1-d_i}  \big[  (1- (1-\beta)p )^{d_i}-(1-p)^{d_i} \big]  , \\
	S_{22} (d_i,\alpha,\beta) =&\ p(1-p)^{d_i} (1-\alpha p)^{N_v-1-d_i} ,\\
	\bm S (d_i,\alpha,\beta) =&\ \frac{1-p}{p}\cdot \bm Q(d_i,\alpha,\beta).
	\end{align*}
	Then, $\tilde{\bm y}_{\text{MME},i}$ in (4.3) can be written as
	\begin{align*} 
	\tilde y_{\text{MME},i }(c_{11})&= S_{11}^{-1}(\hat d_i,\hat\alpha,\hat\beta) \cdot \tilde y_i (c_{11})+S_{12}^{-1}(\hat d_i,\hat\alpha,\hat\beta) \cdot \tilde y_i (c_{10}) ,\\
	\tilde y_{\text{MME},i }(c_{10})&= S_{21}^{-1}(\hat d_i,\hat\alpha,\hat\beta) \cdot \tilde y_i (c_{11})+S_{22}^{-1}(\hat d_i,\hat\alpha,\hat\beta) \cdot \tilde y_i (c_{10}) ,\\
	\tilde y_{\text{MME},i }(c_{01})&= Q_{11}^{-1}(\hat d_i,\hat\alpha,\hat\beta) \cdot \tilde y_i (c_{01})+Q_{12}^{-1}(\hat d_i,\hat\alpha,\hat\beta) \cdot \tilde y_i (c_{00}), \\
	\tilde y_{\text{MME},i }(c_{00})&= Q_{21}^{-1}(\hat d_i,\hat\alpha,\hat\beta) \cdot \tilde y_i (c_{01})+Q_{22}^{-1}(\hat d_i,\hat\alpha,\hat\beta) \cdot \tilde y_i (c_{00})  ,
	\end{align*}
	where $\hat\alpha$, $\hat\beta$ are method-of-moments estimators obtained by Algorithm 1, $\hat d_i=\frac{\tilde d_i-(N_v-1)\hat\alpha}{1-\hat\alpha-\hat\beta}$ and  
	\begin{align*}
	S_{11}^{-1}(\hat d_i,\hat\alpha,\hat\beta)&= \frac{ 1}{p\big[1- [1-(1-\hat\beta)p]^{\hat d_i}  \big] }, \\
	S_{12}^{-1}(\hat d_i,\hat\alpha,\hat\beta)&= -\frac{ 1- (1-\hat\alpha p)^{N_v-1-\hat d_i}   }{p(1-\hat\alpha p)^{N_v-1-\hat d_i}\big[ 1- [1-(1-\hat\beta)p]^{\hat d_i} \big]  }, \\
	S_{21}^{-1}(\hat d_i,\hat\alpha,\hat\beta)&= -\frac{ [1-(1-\hat\beta)p]^{\hat d_i}-(1- p)^{\hat d_i} }{p(1- p)^{\hat d_i}\big[ 1- [1-(1-\hat\beta)p]^{\hat d_i} \big]  }, \\
	S_{22}^{-1}(\hat d_i,\hat\alpha,\hat\beta)&= \frac{ 1- (1- p)^{\hat d_i}-(1-\hat\alpha p)^{N_v-1-\hat d_i}\big[(1-(1-\hat\beta)p)^{\hat d_i}-(1- p)^{\hat d_i}\big] }{p(1- p)^{\hat d_i}(1-\hat\alpha p)^{N_v-1-\hat d_i}  \big[ 1- [1-(1-\hat\beta)p]^{\hat d_i} \big]  } ,\\
	\bm Q^{-1}(\hat d_i,\hat\alpha,\hat\beta)&= \frac{p}{1-p}\cdot \bm S^{-1}(\hat d_i,\hat\alpha,\hat\beta).
	\end{align*} 
	
	Noting that $\hat\alpha=\alpha+ \mathcal O_p(N_v^{-3/2}p^{-1/2})$ and $\hat\beta=\beta+\mathcal O_p(N_v^{-1/2}p^{1/2})$, we have $\tilde {\bar y}_\text{MME}(c_k) \text{ (unknown error rates)} -\tilde {\bar y}_\text{MME}(c_k) \text{ (known error rates)}=\mathcal O_p(N_v^{-1/2}p^{1/2})$, and
	$1-\mathbb P\left(\left|\tilde {\bar y}_\text{MME}(c_k) \text{ (unknown error rates)} -\tilde {\bar y}_\text{MME}(c_k) \text{ (known error rates)}\right| =\mathcal O (N_v^{-1/2}p^{1/2} g   )\right) 
	=   \mathcal O (e^{-g^2/2}), g=\omega(1)$.
	
	Thus, we obtain $\text{Bias}[\tilde {\bar y}_\text{MME}(c_k)] \text{ (unknown error rates)} -\text{Bias}[\tilde {\bar y} _\text{MME}(c_k)] \text{ (known error rates)}=\mathcal O(\max\{N_v^{-1/2}p^{1/2} g,e^{-g^2/2}\}),\text{ where } g=\omega(1)$. Let $g=\sqrt{2\bar d}$, we have $\text{Bias}[\tilde {\bar y}_\text{MME}(c_k)] \text{ (unknown error rates)} -\text{Bias}[\tilde {\bar y} _\text{MME}(c_k)] \text{ (known error rates)}=\mathcal O(\max\{N_v^{-1/2},e^{-\bar d}\})=o(1)$.
	
	Similarly, we can show  $\text{Var}[\tilde {\bar y}_\text{MME}(c_k)] \text{ (unknown error rates)} -\text{Var}[\tilde {\bar y} _\text{MME}(c_k)] \text{ (known error rates)}=o(1)$.
	
	(i) Unbiasedness.
	
	It suffices to show $\tilde {\bar y}_\text{MME}(c_k)$ is asymptotically unbiased when $\alpha$ and $\beta$ are known. By the definition of degree distribution and Chernoff bound,  we have $1-\mathbb P\left(\hat d_i=\Theta(1/p)\cap d_i=\Theta(1/p )\right)=o(e^{-\bar d} )$. Since $\mathbb E[ \tilde y_{A\&S,i}|\tilde {\bm A }]$ is bounded, it suffices to show $\mathbb E[\tilde y_{\text{MME},i }(c_{k})\cdot I_{\{\hat d_i=\Theta(1/p) \cap d_i=\Theta(1/p ) \}}]= y_i(c_k)+o(1)$.

	The conditional expectations of $\tilde y_{\text{MME},i }(c_k)$ are 
	\begin{align*}\label{eqc.2}
	\mathbb{E}\big[\tilde y_{\text{MME},i }(c_{11})|\tilde {\bm A}\big]=&\ \Big[ S_{11}^{-1}(\hat d_i,\alpha,\beta) \cdot \check p_i^e ( c_{11},c_{11} )+ S_{12}^{-1}(\hat d_i,\alpha,\beta) \cdot \check p_i^e ( c_{10},c_{11} )\Big] y_i(c_{11}) \\
	&+ \Big[ S_{11}^{-1}(\hat d_i,\alpha,\beta) \cdot \check p_i^e ( c_{11},c_{10} )+ S_{12}^{-1}(\hat d_i,\alpha,\beta) \cdot \check p_i^e ( c_{10},c_{10} )\Big] y_i(c_{10}) ,\\
	\mathbb{E}\big[\tilde y_{\text{MME},i }(c_{10})|\tilde {\bm A}\big]=&\ \Big[S_{21}^{-1}(\hat d_i,\alpha,\beta) \cdot \check p_i^e ( c_{11},c_{11} )+ S_{22}^{-1}(\hat d_i,\alpha,\beta) \cdot \check p_i^e ( c_{10},c_{11} )\Big] y_i(c_{11}) \\
	&+ \Big[ S_{21}^{-1}(\hat d_i,\alpha,\beta) \cdot \check p_i^e ( c_{11},c_{10} )+ S_{22}^{-1}(\hat d_i,\alpha,\beta) \cdot \check p_i^e ( c_{10},c_{10} )\Big] y_i(c_{10}), \\
	\mathbb{E}\big[\tilde y_{\text{MME},i }(c_{01})|\tilde {\bm A}\big]=&\ \Big[ Q_{11}^{-1}(\hat d_i,\alpha,\beta) \cdot \check p_i^e ( c_{01},c_{01} )+ Q_{12}^{-1}(\hat d_i,\alpha,\beta) \cdot \check p_i^e ( c_{00},c_{01} )\Big] y_i(c_{01}) \\
	&+ \Big[ Q_{11}^{-1}(\hat d_i,\alpha,\beta) \cdot \check p_i^e ( c_{01},c_{00} )+ Q_{12}^{-1}(\hat d_i,\alpha,\beta) \cdot \check p_i^e ( c_{00},c_{00} )\Big] y_i(c_{00}) ,\\
	\mathbb{E}\big[\tilde y_{\text{MME},i }(c_{00})|\tilde {\bm A}\big]=&\ \Big[ Q_{21}^{-1}(\hat d_i,\alpha,\beta) \cdot \check p_i^e ( c_{01},c_{01} )+ Q_{22}^{-1}(\hat d_i,\alpha,\beta) \cdot \check p_i^e ( c_{00},c_{01} )\Big] y_i(c_{01}) \\
	&+ \Big[ Q_{21}^{-1}(\hat d_i,\alpha,\beta) \cdot \check p_i^e ( c_{01},c_{00} )+ Q_{22}^{-1}(\hat d_i,\alpha,\beta) \cdot \check p_i^e ( c_{00},c_{00} )\Big] y_i(c_{00}),\numberthis
	\end{align*}
	where 
	\begin{align*}
	\check p_i^e(c_{11},c_{11})&=p\big[1-(1-p)^{d_i}-(1-p)^{\tilde d_i}+(1-p)^{d_i+\tilde d_i-\check d_i}\big] ,\\
	\check p_i^e(c_{10},c_{10})&=p(1-p)^{d_i+\tilde d_i-\check d_i},\\
	\check p_i^e(c_{11},c_{10})&=p(1-p)^{d_i}\big[1-(1-p)^{\tilde d_i-\check d_i}\big],\\
	\check p_i^e(c_{10},c_{11})&=p(1-p)^{\tilde d_i}\big[1-(1-p)^{d_i-\check d_i}\big],\\
	\check p_i^e(c_{01},c_{01})&=\frac{1-p}{p}\ \check p_i^e(c_{11},c_{11}),\\
	\check p_i^e(c_{00},c_{00})&=\frac{1-p}{p}\ \check p_i^e(c_{10},c_{10}),\\
	\check p_i^e(c_{01},c_{00})&=\frac{1-p}{p}\ \check p_i^e(c_{11},c_{10}),\\
	\check p_i^e(c_{00},c_{01})&=\frac{1-p}{p}\ \check p_i^e(c_{10},c_{11}).
	\end{align*}
	
	%\begin{remark}
	%Note that $\hat\alpha$, $\hat\beta$ are method-of-moments estimators based on three observed networks $\tilde{\bm A}, \tilde{\bm A}_*$, and $\tilde{\bm A}_{**}$. Thus, the expectations in (\ref{eqc.2}) are actually conditional on $\tilde{\bm A}, \tilde{\bm A}_*$, and $ \tilde{\bm A}_{**}$.  For notational simplicity, we omit $\tilde{\bm A}_*, \tilde{\bm A}_{**}$. 
	%\end{remark}
	Next, we compute the expectation of $ \mathbb{E}\big[\tilde y_{\text{MME},i }(c_{k})\cdot I_{\{\hat d_i=\Theta(1/p)\cap d_i=\Theta(1/p)\}}|\tilde {\bm A}\big]$. For the exposure level $c_{11}$, notice that
	\begin{align*}
	&\ S_{11}^{-1}(\hat d_i,\alpha,\beta) \cdot \check p_i^e ( c_{11},c_{11} )+ S_{12}^{-1}(\hat d_i,\alpha,\beta) \cdot \check p_i^e ( c_{10},c_{11} )\\
	=&\  \frac{ [1-(1-p)^{d_i}]    }{ 1- [1-(1-\beta)p]^{\hat d_i}  }- \frac{ (1-p)^{\tilde d_i}\big[1-(1-p)^{d_i-\check d_i}\big]}{(1-\alpha p)^{N_v-1-\hat d_i}\big[ 1- [1-(1-\beta)p]^{\hat d_i} \big]  }.
	\end{align*}
	By Lemma \ref{ll1}, we have 
	\begin{align*}
	&\ \mathbb E\left[  \left\{\ S_{11}^{-1}(\hat d_i,\alpha,\beta) \cdot \check p_i^e ( c_{11},c_{11} )+ S_{12}^{-1}(\hat d_i,\alpha,\beta) \cdot \check p_i^e ( c_{10},c_{11} )\right\} \cdot I_{\{\hat d_i=\Theta(1/p)\cap d_i=\Theta(1/p) \}}\right]\\
	=&\ \frac{ [1-(1-p)^{d_i}]    }{ \mathbb E\left[1- [1-(1-\beta)p]^{\hat d_i} \right] } - \frac{\mathbb E\left[ (1-p)^{\tilde d_i}\big[1-(1-p)^{d_i-\check d_i}\big]\right]}{ \mathbb E\left[(1-\alpha p)^{N_v-1-\hat d_i}\big[ 1- [1-(1-\beta)p]^{\hat d_i} \big] \right ] }\\
	&\ +\mathcal O\left( p^{1/3} \right),
	\end{align*} 
	where 
	\begin{align*}
	\mathbb E\left[ (1-p)^{\tilde d_i}\big[1-(1-p)^{d_i-\check d_i}\big]\right] =& \ (1-\alpha p)^{N_v-1-d_i} \cdot \left[1-(1-\beta)p\right]^{ d_i}\cdot \left[1-(1-\beta p)^{d_i} \right],\\
	\mathbb E\left[1- [1-(1-\beta)p]^{\hat d_i} \right] =& \ 1- [1-(1-\beta)p]^{-\frac{\alpha(N_v-1) }{1-\alpha-\beta}}\\
	& \ \cdot \bigg \{\left[ (1-\beta)(1-(1-\beta)p)^{\frac{1}{1-\alpha-\beta}}+\beta \right]^{d_i}\\
	&\  \cdot \left[ \alpha(1-(1-\beta)p)^{\frac{1}{1-\alpha-\beta}}+(1-\alpha)\right]^{N_v-1-d_i} \bigg\},    \\
	\mathbb E\left[(1-\alpha p)^{N_v-1-\hat d_i}\right]=&\ (1-\alpha p)^{N_v-1+\frac{\alpha(N_v-1)}{1-\alpha-\beta}} \cdot  \left[ (1-\beta)(1-\alpha p)^{-\frac{1}{1-\alpha-\beta}}+\beta \right]^{d_i}\\
	&\  \cdot \left[ \alpha(1-\alpha p )^{-\frac{1}{1-\alpha-\beta}}+ (1-\alpha) \right]^{N_v-1-d_i}  ,\\
	\mathbb E\left[ \left( \frac{1-(1-\beta)p}{1-\alpha p} \right)^{\hat d_i}\right]=&\  \left( \frac{1-(1-\beta)p}{1-\alpha p} \right)^{ -\frac{\alpha(N_v-1) }{1-\alpha-\beta} }\cdot  \left[ (1-\beta)\left( \frac{1-(1-\beta)p}{1-\alpha p} \right)^{\frac{1}{1-\alpha-\beta}}+\beta \right]^{d_i}\\
	&\  \cdot \left[ \alpha\left( \frac{1-(1-\beta)p}{1-\alpha p} \right)^{ \frac{1}{1-\alpha-\beta}}+ (1-\alpha) \right]^{N_v-1-d_i} .
	\end{align*}
	Then, direct computation yields
	\begin{align*}
	\mathbb E\left[  \left\{\ S_{11}^{-1}(\hat d_i,\alpha,\beta) \cdot \check p_i^e ( c_{11},c_{11} )+ S_{12}^{-1}(\hat d_i,\alpha,\beta) \cdot \check p_i^e ( c_{10},c_{11} )\right\} \cdot I_{\{\hat d_i=\Theta(1/p)\cap d_i=\Theta(1/p)\}}\right]=1+\mathcal O(p^{1/3} ).    
	\end{align*}
	Similarly, we have 
	\begin{align*}
	\mathbb E\left[\left	\{S_{11}^{-1}(\hat d_i,\alpha,\beta) \cdot \check p_i^e ( c_{11},c_{10} )+ S_{12}^{-1}(\hat d_i,\alpha,\beta) \cdot \check p_i^e ( c_{10},c_{10} )\right\}  \cdot I_{\{\hat d_i=\Theta(1/p)\cap d_i=\Theta(1/p)\}}  \right] = \mathcal O(p^{1/3} ).
	\end{align*}
	Thus, we obtain $\mathbb E[\tilde y_{\text{MME},i }(c_{11}) \cdot I_{\{\hat d_i=\Theta(1/p)\cap d_i=\Theta(1/p)\}}  ] =y_i(c_{11}) +\mathcal O(p^{1/3} )$. Analogously, we can show $\mathbb{E}\big[\tilde y_{\text{MME},i }(c_{k})\cdot I_{\{\hat d_i=\Theta(1/p)\cap d_i=\Theta(1/p)\}}]=y_i(c_k)+\mathcal O(p^{1/3} )$ for other exposure levels. Therefore, we have $\text{Bias}(\tilde{\bar y}_{\text{MME} }(c_{k}))=\mathcal O(p^{1/3} )$.
	
	(ii) Consistency.
	
	Since $	\tilde {\overline{y}}_\text{MME}(c_k)$ is an asymptotically unbiased estimator of $\overline{y}(c_k)$ and $\text{Var}(\tilde {\overline{y}}_\text{MME}(c_k)) \text{ (unknown error rates) }- \text{Var}(\tilde {\overline{y}}_\text{MME}(c_k))\text{ (known error rates) }=o(1)$, it suffices to show $\text{Var}(\tilde {\overline{y}}_\text{MME}(c_k)) \text{ (known error rates) }=o(1)$.  Next, we compute $\text{Var}(\tilde {\overline{y}}_\text{MME}(c_k))$ when $\alpha$ and $\beta$ are known, i.e., $\hat d_i=\frac{\tilde d_i-(N_v-1) \alpha}{1- \alpha- \beta}.$
	
	By Cauchy-Schwarz inequality and the inequality $2uv\leq u^2+v^2$, we have
	\begin{align*}\text{Var}[\tilde {\overline{ {y}}}_\text{MME}(c_k)]\leq&\ 2\Bigg\{ \text{Var}\Big[ \frac{1}{N_v}\sum_{i=1}^{N_v}  \tilde {{ {y}}}_{ \text{MME},i}(c_k)\cdot I_{\{\hat d_i=\Theta(1/p)\}}\Big]\\
	&+\text{Var}\Big[ \frac{1}{N_v}\sum_{i=1}^{N_v}\tilde {{ {y}}}_{ \text{A\&S},i}(c_k)\cdot I_{\{\hat d_i=\omega(1/p) \bigcap c_k\in\{c_{11},c_{01} \} \}} \Big] \\
	&+\text{Var}\Big[ \frac{1}{N_v}\sum_{i=1}^{N_v}\tilde {{ {y}}}_{ \text{A\&S},i}(c_k)\cdot I_{\{\hat d_i=o(1/p) \bigcap c_k\in\{c_{10},c_{00} \} \}} \Big]\Bigg\} . 
	\end{align*}
	Following the proof of Theorem 2, we can show $\text{Var}\Big[ \frac{1}{N_v}\sum_{i=1}^{N_v}\tilde {{ {y}}}_{ \text{A\&S},i}(c_k)\cdot I_{\{\hat d_i=\omega(1/p)  \}} \Big]=o(1)$ and $\text{Var}\Big[ \frac{1}{N_v}\sum_{i=1}^{N_v}\tilde {{ {y}}}_{ \text{A\&S},i}(c_k)\cdot I_{\{\hat d_i=o(1/p) \}} \Big]=o(1)$. Next, we prove $\text{Var}\Big[ \frac{1}{N_v}\sum_{i=1}^{N_v}  \tilde {{ {y}}}_{ \text{MME},i}(c_k)\cdot I_{\{\hat d_i=\Theta(1/p)\}}\Big]=o(1)$.
	
	For the exposure level $c_{00}$, we have
	\begin{align*}
	\tilde y_{\text{MME},i }(c_{00})= Q_{21}^{-1}(\hat d_i,\alpha,\beta)\tilde{p}_{i}(c_{01}) \cdot \tilde y_\text{A\&S,i} (c_{01})+Q_{22}^{-1}(\hat d_i,\alpha,\beta) \tilde{p}_{i}(c_{00})\cdot \tilde y_\text{A\&S,i}(c_{00}),
	\end{align*}
	where $Q_{21}^{-1}(\hat d_i,\alpha,\beta)\tilde{p}_{i}(c_{01})$ and $Q_{22}^{-1}(\hat d_i,\alpha,\beta) \tilde{p}_{i}(c_{00})$ are bounded when $\hat d_i=\Theta(1/p)$. Again, by Cauchy-Schwarz inequality and the inequality $2uv\leq u^2+v^2$, we obtain
	\begin{align*}\label{eqc.3}
	&\ \text{Var}\Big[ \frac{1}{N_v}\sum_{i=1}^{N_v}  \tilde {{ {y}}}_{ \text{MME},i}(c_{00} )\cdot I_{\{\hat d_i=\Theta(1/p)\}}\Big]\\
	\leq &\ 2 \Bigg\{\text{Var}\Big[ \frac{1}{N_v}\sum_{i=1}^{N_v} Q_{21}^{-1}(\hat d_i,\alpha,\beta)\tilde{p}_{i}(c_{01})  \tilde y_\text{A\&S,i} (c_{01})\cdot I_{\{\hat d_i=\Theta(1/p)\}}\Big] \\
	& + \text{Var}\Big[ \frac{1}{N_v}\sum_{i=1}^{N_v} Q_{22}^{-1}(\hat d_i,\alpha,\beta) \tilde{p}_{i}(c_{00})  \tilde y_\text{A\&S,i}(c_{00})\cdot I_{\{\hat d_i=\Theta(1/p)\}}\Big]   \Bigg\}.\numberthis
	\end{align*}
	Thus, it suffices to show the two variances in (\ref{eqc.3}) go to zero as $N_v\rightarrow\infty$. Here, we show $\text{Var}\Big[ \frac{1}{N_v}\sum_{i=1}^{N_v} Q_{22}^{-1}(\hat d_i,\alpha,\beta) \tilde{p}_{i}(c_{00})  \tilde y_\text{A\&S,i}(c_{00})\cdot I_{\{\hat d_i=\Theta(1/p)\}}\Big] =o(1)$. The other one can be proved similarly.

	By the law of total variance, we have
	\begin{align*}
	&\ \text{Var}\Big[ \frac{1}{N_v}\sum_{i=1}^{N_v} Q_{22}^{-1}(\hat d_i,\alpha,\beta) \tilde{p}_{i}(c_{00})  \tilde y_\text{A\&S,i}(c_{00})\cdot I_{\{\hat d_i=\Theta(1/p)\}}\Big]\\
	=&\ \text{Var}\Big[ \mathbb E\Big(\frac{1}{N_v}\sum_{i=1}^{N_v}  Q_{22}^{-1}(\hat d_i,\alpha,\beta) \tilde{p}_{i}(c_{00})  \tilde y_\text{A\&S,i}(c_{00})\cdot I_{\{\hat d_i=\Theta(1/p)\}}|\tilde{\bm A} \Big)\Big]\\
	&\ +\mathbb E \Big[ \text{Var}\Big(\frac{1}{N_v}\sum_{i=1}^{N_v} Q_{22}^{-1}(\hat d_i,\alpha,\beta) \tilde{p}_{i}(c_{00})  \tilde y_\text{A\&S,i}(c_{00})\cdot I_{\{\hat d_i=\Theta(1/p)\}}|\tilde{\bm A} \Big)\Big].
	\end{align*}
	Since $Q_{22}^{-1}(\hat d_i,\alpha,\beta) \tilde{p}_{i}(c_{00})\cdot I_{\{\hat d_i=\Theta(1/p)\}}=O(1)$, we obtain
	\begin{align*}
	& \text{Var}\Big[ \mathbb E\Big(\frac{1}{N_v}\sum_{i=1}^{N_v}  Q_{22}^{-1}(\hat d_i,\alpha,\beta) \tilde{p}_{i}(c_{00})  \tilde y_\text{A\&S,i}(c_{00})\cdot I_{\{\hat d_i=\Theta(1/p)\}}|\tilde{\bm A} \Big)\Big] \\
	= &\  \mathcal O\Bigg(  \text{Var}\Big[ \mathbb E\Big(\frac{1}{N_v}\sum_{i=1}^{N_v}   \tilde y_\text{A\&S,i}(c_{00})|\tilde{\bm A} \Big)\Big]\Bigg).
	\end{align*}
	Notice that $\text{Var}\Big[ \mathbb E\Big(\frac{1}{N_v}\sum_{i=1}^{N_v}   \tilde y_\text{A\&S,i}(c_{00})|\tilde{\bm A} \Big)\Big]=o(1)$ (shown in the proof of Theorem 2), we obtain 
	\begin{align*}
	\text{Var}\Big[ \mathbb E\Big(\frac{1}{N_v}\sum_{i=1}^{N_v}  Q_{22}^{-1}(\hat d_i,\alpha,\beta) \tilde{p}_{i}(c_{00})  \tilde y_\text{A\&S,i}(c_{00})\cdot I_{\{\hat d_i=\Theta(1/p)\}}|\tilde{\bm A} \Big)\Big]=o(1).
	\end{align*}
	Following the proof of Theorem 2, we can show
	\begin{align*}
	&\ \text{Var}\Big(\frac{1}{N_v}\sum_{i=1}^{N_v} Q_{22}^{-1}(\hat d_i,\alpha,\beta) \tilde{p}_{i}(c_{00})  \tilde y_\text{A\&S,i}(c_{00})\cdot I_{\{\hat d_i=\Theta(1/p)\}}|\tilde{\bm A} \Big)\\
	\leq&\ \frac{C_1}{N_v^2}    \sum_{i=1}^{N_v} \frac{1}{\tilde p_i^e(c_{00})}+\frac{C_2}{N_v^2} \sum_{i=1}^{N_v} \sum_{j\neq i }\frac{\tilde g_{ij}\tilde p_{ij}^e(c_{00}) }{\tilde p_i^e(c_{00})\tilde p_j^e(c_{00})},
	\end{align*}
	where $C_1$ and $C_2$ are positive constants. By Proposition 4 and (\ref{eqb.1}), we obtain 
	\begin{align*}
	\mathbb E \Big[ \text{Var}\Big(\frac{1}{N_v}\sum_{i=1}^{N_v} Q_{22}^{-1}(\hat d_i,\alpha,\beta) \tilde{p}_{i}(c_{00})  \tilde y_\text{A\&S,i}(c_{00})\cdot I_{\{\hat d_i=\Theta(1/p)\}}|\tilde{\bm A} \Big)\Big]=o(1).
	\end{align*}
	These complete the proof.
	
	\subsection{Proof of Theorem 5}
	
	The proof of Theorem 5 is same as that of Theorem 4.
	
	\subsection{Proof of Theorem 6}
	
	We first introduce a useful theorem. \cite{baldi1989normal} proves the following:
	
	Let $\{Z_i,i\in V\}$ be random variables having a dependency graph $G^*=(V,E^*)$. For $i\in V$, let $L_i^{(k)}$ denote the number of connected subsets of $V$ of cardinality at most $k$ which contain $i$. Let $W=\sum_{i\in V} Z_i$ and $\sigma^2=\text{Var} (W)<\infty.$ Set 
	\begin{align} \label{eq:a}
	\frac{1}{\sigma^k} \sum_{i\in V}  \mathbb E \left(L_i^{(k)}|Z_i-\mathbb E Z_i|^k\right)= A_k <\infty,
	\ k=3,4.
	\end{align}
	Then for all real $w$, 
	\begin{align}\label{eq:clt1}
	\Big|\mathbb P\Big( \frac{W-\mathbb E W}{\sigma}\leq w \Big)-\Phi(w)\Big| &\leq  c \ (\sqrt{A_3} + \sqrt{A_4} ),
	\end{align}
	for some constant $0\leq c <8$ which does not depend on $\{Z_i\}$.

	Recall that $G^*$ is said to be a dependency graph if  for any pair of disjoint sets $V_1, V_2$ in $V$ such that no edge in $E^*$ has one endpoint in $V_1$ and the other in $V_2$, the sets of random variables $\{X_i,i\in V_1 \}$ and $\{X_i,i\in V_2 \}$ are independent. Next, we show the following lemma by the above theorem.
	
	\begin{lemma}\label{l3}
		Let $\{Z_i,i\in V\}$ denote random variables having a dependency graph $G^*=(V,E^*)$. For $i\in V$, let $L_i^{(k)}$ denote the number of connected subsets of $V$ of cardinality at most $k$ which contain $i$. Let $W=\sum_{i\in V} Z_i$ and $\sigma^2=\text{Var} (W)<\infty$. Suppose that $\mathbb E |Z _i - \mathbb E Z_i|^k\leq B_k$, $k=6,8$. Then for all real $w$, 
		\begin{align*}
		\Big|\mathbb P\Big( \frac{W-\mathbb E W}{\sigma}\leq w \Big)-\Phi(w)\Big|  = &\ \mathcal O \Bigg(\sqrt{\frac{B_6^{1/2} }{\sigma^3} \left(\sum_{i\in V}  \mathbb E [(L_i^{(3)})^2 ]\right)^{1/2} }  \\
		&\ + \sqrt{\frac{B_8^{1/2} }{\sigma^4} \left(\sum_{i\in V}  \mathbb E [(L_i^{(4)})^2 ]\right)^{1/2} }  \Bigg) .
		\end{align*}
	\end{lemma}
	
	\begin{proof}
		By Cauchy–Schwarz inequality, we have
		\begin{align*}
		\mathbb E \left(L_i^{(k)}|Z_i-\mathbb E Z_i|^k\right) \leq \sqrt{ \mathbb E[(L_i^{(k)})^2 ] \cdot \mathbb E|Z_i-\mathbb E Z_i|^{2k} }
		\end{align*}
		Then by (\ref{eq:clt1}), we can show the lemma.
	\end{proof}

	Now consider our method-of-moments estimators. Note that $\frac{\tilde{\bar y}_\text{MME} (c_k)\text{(known error rates)}-\tilde{\bar y}_\text{MME} (c_k)\text{(unknown error rates)}  }{\sqrt{\text{Var}( \tilde{\bar y}_\text{MME} (c_k) )  } \text{(unknown error rates)}}\xrightarrow{P} 0$. By Lemma \ref{l5}, we obtain that $\frac{\tilde{\bar y}_\text{MME} (c_k)-\mathbb E( \tilde{\bar y}_\text{MME} (c_k)  )}{\sqrt{\text{Var}( \tilde{\bar y}_\text{MME} (c_k) )  } } \text{ (known error rates)}$ and $\frac{\tilde{\bar y}_\text{MME} (c_k)-\mathbb E( \tilde{\bar y}_\text{MME} (c_k)  )}{\sqrt{\text{Var}( \tilde{\bar y}_\text{MME} (c_k) )  } } \text{ (unknown error rates)}$ have the same asymptotic distribution, and $\text{Var}( \tilde{\bar y}_\text{MME} (c_k) )\text{(known error rates)}\sim \text{Var}( \tilde{\bar y}_\text{MME} (c_k) )\text{(unknown error rates)}$.  Thus it suffices to show the asymptotic normality of $\frac{\tilde{\bar y}_\text{MME} (c_k)-\mathbb E( \tilde{\bar y}_\text{MME} (c_k)  )}{\sqrt{\text{Var}( \tilde{\bar y}_\text{MME} (c_k) )  } } $ for known error rates. 
	
	Next we show $\frac{\tilde{\bar y}_\text{MME} (c_k)-\mathbb E( \tilde{\bar y}_\text{MME} (c_k)  )}{\sqrt{\text{Var}( \tilde{\bar y}_\text{MME} (c_k) )  } } \xrightarrow{d} N(0,1)$ for known error rates. Let $G^*=(V,E^*)$ be the union of the true graph $G=(V,E)$ and the observed graph $G^\text{obs}=(V,E^\text{obs})$. For $i\in V$, let $L_i^{(k)}$ denote the number of connected subsets of $V$ of cardinality at most $k$ which contain $i$. Let $Z_i(c_k)$ be the estimator of $y_i(c_k)/N_v$, i.e., 
	\begin{align}\label{eq:defZ}
	\begin{split}
	Z_i(c_k) =  &\  \frac{1}{N_v} \Big\{  \tilde {{ {y}}}_{ \text{MME},i}(c_k)\cdot I_{\{\hat d_i=\Theta(1/p)\}}+\tilde {{ {y}}}_{ \text{A\&S},i}(c_k)\cdot I_{\{\hat d_i= \omega(1/p) \bigcap c_k\in\{c_{11},c_{01} \} \}} \\
	&+\tilde {{ {y}}}_{ \text{A\&S},i}(c_k)\cdot I_{\{\hat d_i=o(1/p) \bigcap c_k\in\{c_{10},c_{00} \}  \}} \Big\} .
	\end{split}
	\end{align}
	Then, $G^*$ is a dependence graph of the set of random variables $\{ Z_i (c_k), i\in V\}$. Define $W(c_k)=\sum_{i\in V} Z_i(c_k)$, $\sigma^2(c_k)=\text{Var} (W(c_k))$. By Theorem 5, we have $\sigma^2(c_k)<\infty$. Let $C_{\mathcal V_1}$ be the two-stars counts in $G$, $C_{\mathcal V_2}$ counts of 3 connected edges in $G$ passing through 4 different nodes. By the definition of $G^*$, we have $(\sum_{i\in V}  \mathbb E [(L_i^{(3)})^2 ])^{1/2}=\mathcal O(C_{\mathcal V_1})$ and $(\sum_{i\in V}  \mathbb E [(L_i^{(4)})^2 ])^{1/2}=\mathcal O(C_{\mathcal V_1}+C_{\mathcal V_2})$. Let $B_6(c_k)$ be an upper bound of $\mathbb E|Z_i(c_k)-\mathbb E Z_i(c_k)|^6$, $B_8(c_k)$ be an upper bound of $\mathbb E|Z_i(c_k)-\mathbb E Z_i(c_k)|^8$, $i\in V$. Then, by Lemma \ref{l3}, we have 
	\begin{align*}
	\Big|\mathbb P\Big( \frac{W(c_k)-\mathbb E W(c_k)}{\sigma(c_k)}\leq w \Big)-\Phi(w)\Big| &=  \mathcal O   \Big(\sqrt{\frac{B_6^{1/2}(c_k)}{\sigma^3(c_k)} C_{\mathcal V_1}} + \sqrt{ \frac{B_8^{1/2}(c_k)}{\sigma^4(c_k)} (C_{\mathcal V_1}+C_{\mathcal V_2}) } \Big) .
	\end{align*}
	Therefore, it suffices to show that 
	\begin{align*}
	\frac{B_6^{1/2}(c_k)}{\sigma^3(c_k)} C_{\mathcal V_1} =o(1), \frac{B_8^{1/2}(c_k)}{\sigma^4(c_k)} C_{\mathcal V_1}  =o(1) , \text{ and }    \frac{B_8^{1/2}(c_k)}{\sigma^4(c_k)}  C_{\mathcal V_2} =o(1). 
	\end{align*}
	
	By the definition of $Z_i(c_k)$, we can easily obtain that 
	\begin{align*}
	B_6 (c_{k} )  =\mathcal O\Big(   \frac{1}{p^6N_v^6 }   \Big), \text{ and }B_8 (c_{k} )=\mathcal O\Big(   \frac{1}{p^8N_v^8  }   \Big), \text{ where } c_k \in \{ c_{11},c_{10} \},\\
	B_6 (c_{k} )  =\mathcal O\Big(   \frac{1}{N_v^6 }   \Big), \text{ and }B_8 (c_{k} )=\mathcal O\Big(   \frac{1}{N_v^8  }   \Big), \text{ where } c_k \in \{ c_{01},c_{00} \}.
	\end{align*}
	Note that $C_{\mathcal V_1}=\Theta(N_v(\bar d)^2)$. Direct calculation yields to
	\begin{align*}
	\frac{B_6^{1/2}(c_k)}{\sigma^3(c_k)} C_{\mathcal V_1} =o(1), \frac{B_8^{1/2}(c_k)}{\sigma^4(c_k)} C_{\mathcal V_1}  =o(1) , \text{ and }    \frac{B_8^{1/2}(c_k)}{\sigma^4(c_k)}  C_{\mathcal V_2} =o(1).
	\end{align*}
	Thus, we show 
	$$\Big|\mathbb P\Big( \frac{W(c_k)-\mathbb E W(c_k)}{\sigma(c_k)}\leq w \Big)-\Phi(w)\Big|  =  o(1).$$
	
	\subsection{Proof of Theorem 7}
	
	The proof of Theorem 7 is same as that of Theorem 6.

	\subsection{Proof of Corollary 2}
	
	Under Assumptions 1 -- 7, by Slutsky's theorem, we obtain that $\frac{\tilde{\bar y}_\text{MME} (c_k)-{\bar y} (c_k) }{\sqrt{\text{Var}( \tilde{\bar y}_\text{MME} (c_k) )  } }$ and $\frac{\tilde{\bar y}_\text{MME} (c_k)-\mathbb E( \tilde{\bar y}_\text{MME} (c_k)  )}{\sqrt{\text{Var}( \tilde{\bar y}_\text{MME} (c_k) )  } }$ have the same asymptotic distribution. Then, by Theorem 6, we have $\frac{\tilde{\bar y}_\text{MME} (c_k)-{\bar y} (c_k) }{\sqrt{\text{Var}( \tilde{\bar y}_\text{MME} (c_k) )  } }\xrightarrow{d}N(0,1)$.

	\section{Proofs of theorems in the generalized four-level exposure model}
	\subsection{Proof of Theorem 8}
	In the generalized four-level exposure model and Bernoulli random assignment of treatment with $p$, for each individual $i$, the four exposure probabilities are as follows.
	\begin{equation}
	\begin{aligned}
	p_i^e(c_{11'})&=p \sum_{x=m_i}^{d_i}\binom{d_i}{x}p^x(1-p)^{d_i-x}, \\
	p_i^e(c_{10'})&=p\sum_{x=0}^{(m_i-1)\wedge d_i}\binom{d_i}{x}p^x(1-p)^{d_i-x},\\
	p_i^e(c_{01'})&=(1-p) \sum_{x=m_i}^{d_i}\binom{d_i}{x}p^x(1-p)^{d_i-x}, \\
	p_i^e(c_{00'})&=(1-p)\sum_{x=0}^{(m_i-1)\wedge d_i}\binom{d_i}{x}p^x(1-p)^{d_i-x}.
	\end{aligned}
	\end{equation}

	For $m_i=1$, direct computations lead to the results in Theorem 8. Here, we show the case when $m_i\geq 2$.
	Note that, for all $1\leq x\leq d_i$, we have (\cite{das2016brief})
	\begin{align*}
	\Big(\frac{d_i}{x}\Big)^x\leq \binom{d_i}{x}\leq  \frac{d_i^x}{x!} .
	\end{align*}
	For the exposure level $c_{00'}$, we obtain
	\begin{align*}
	p_i^e(c_{00'})&\geq \frac{1-p}{(m_i-1)^{m_i-1}} \sum_{x=1}^{m_i-1} (d_ip)^x(1-p)^{d_i-x}+(1-p)^{d_i+1} \\
	&\geq C_1(1-p)^{d_i+1} \frac{(d_ip)^{m_i}(1-p)^{-(m_i-1)} -(1-p)}{p(d_i+1)-1}
	\end{align*}
	and
	\begin{align*}
	p_i^e(c_{00'})\leq  (1-p) \sum_{x=1}^{m_i-1} (d_ip)^x(1-p)^{d_i-x}+(1-p)^{d_i+1}  \leq C_2(1-p)^{d_i+1} \frac{(d_ip)^{m_i}(1-p)^{-(m_i-1)} -(1-p)}{p(d_i+1)-1},
	\end{align*}
	where $C_1$ and $C_2$ are positive constants.
	Thus, if $p=o(1)$, we have
	\begin{align*}
	p_i^e(c_{00'})& =\begin{cases}
	\Theta\Big(	\dfrac{ (d_ip)^{m_i-1}}  {e^{d_ip} } \Big), & d_i=\omega(1/p),\\
	\Theta\Big(1 \Big), & d_i=\Theta(1/p),\\
	\Theta\Big(1 \Big), & d_i=o(1/p).
	\end{cases} 
	\end{align*}
	
	For the exposure level $c_{01'}$, we have
	\begin{align*}
	p_i^e(c_{01'})\geq (1-p) \sum_{x=1}^{m_i-1} p^x(1-p)^{d_i-x}   \geq C_3(1-p)  \frac{p^{m_i}(1-p)^{d_i-(m_i-1)} -p^{d_i+1}  }{1-2p}
	\end{align*}
	and
	\begin{align*}
	p_i^e(c_{01'})\leq  \frac{1-p}{m_i!}  \sum_{x=m_i}^{d_i} (d_ip)^x(1-p)^{d_i-x}   \leq C_4(1-p)  \frac{(d_ip)^{d_i+1}-(d_ip)^{m_i} (1-p)^{d_i-(m_i-1)}  }{p(d_i+1)-1},
	\end{align*}
	where $C_3$ and $C_4$ are positive constants. Together with $p_i^e(c_{01'})=1-p-p_i^e(c_{00'})$, we obtain 
	\begin{align*}
	p_i^e(c_{01'})& =\begin{cases}
	\Theta\Big(1\Big), & d_i=\omega(1/p),\\
	\mathcal O\Big(1 \Big) \text{ and } \Omega(p^{m_i} ), & d_i=\Theta(1/p),\\
	\mathcal O\Big( (d_ip)^{m_i}   \Big)  \text{ and } \Omega(p^{m_i} ), & d_i=o(1/p).
	\end{cases} 
	\end{align*}
	Note that $p_i^e(c_{11'}) =p/(1-p)\cdot p_i^e(c_{01'})$ and $p_i^e(c_{10'}) =p/(1-p)\cdot p_i^e(c_{00'})$. The results for exposure levels $c_{11'}$ and $c_{10'}$ follow.
	
	\section{Proofs for Pareto degree distribution without a cutoff}
	
	In this section, we show that assume a four-level exposure model and Bernoulli random assignment of treatment with $p$. In the inhomogeneous graph where the asymptotic degree distribution is the Pareto distribution with shape $\zeta>1$, lower bound $d_L$, upper bound $N_v-1$ and mean $\bar d$, under Assumption 4, Condition 1 doesn't hold for levels $c_{10}$. 
	
	As $N_v\rightarrow\infty$, we have
	\begin{align*}\label{eq:E1}
	\mathbb E[1/p^e(c_{10})]& =\frac{1}{p}\cdot  \frac{\zeta d_L^\zeta }{1- \Big(\dfrac{d_L}{N_v-1}\Big )^\zeta } \cdot \int_{d_L}^{N_v-1} (1-p)^{-x}  x^{-(\zeta+1)}dx  \\
	& =\Theta \Bigg( N_v\cdot   \frac{d_L^\zeta }{p N_v} \cdot  \int_{d_L}^{N_v-1} (1-p)^{-x}  x^{-(\zeta+1)}dx \Bigg)  . \numberthis
	\end{align*}	
	Note that $\lim_{x\rightarrow\infty}(1-p)^{-x}  x^{-(\zeta+1)}=\lim_{x\rightarrow\infty} p^{\zeta+1} e^{px}$. By (\ref{eq:E1}), we have 
	\begin{align}\label{eq:E2}
	\mathbb E[1/p^e(c_{10})]=\Omega(N_v\cdot (d_Lp)^\zeta\cdot e^{pN_v} ).
	\end{align}
	Next, we compute the order of $d_L$. By the definition of expectation, we have
	\begin{align*}  
	\int_{d_L}^{N_v-1}x\cdot \frac{\zeta d_L^\zeta }{1- \Big(\dfrac{d_L}{N_v-1}\Big )^\zeta } x^{-(\zeta+1)}dx
	&=\begin{cases}
	\zeta d_L^\zeta \cdot  \dfrac{\log\Big(\dfrac{N_v-1}{d_L}\Big ) } {1- \Big(\dfrac{d_L}{N_v-1}\Big )^\zeta } ,   \text{ if }  \zeta=1\\
	\dfrac{\zeta d_L}{\zeta-1}\cdot  \dfrac{1- \Big(\dfrac{d_L}{N_v-1}\Big )^{\zeta-1} } {1- \Big(\dfrac{d_L}{N_v-1}\Big )^\zeta } ,   \text{ otherwise, }\\
	\end{cases}
	\end{align*}
	Therefore, as $N_v\rightarrow\infty$, we obtain $d_L=\Theta(\bar d)$ when $\zeta>1$. By (\ref{eq:E2}), we have $\mathbb E[1/p^e(c_{10})]=\Omega(N_v)$. Therefore, condition 1 doesn't hold.